\newcommand{\be}{\begin{equation}}
\newcommand{\ee}{\end{equation}}
\newcommand{\bea}{\begin{eqnarray}}
\newcommand{\eea}{\end{eqnarray}}
\def\squareforqed{\hbox{\rlap{$\sqcap$}$\sqcup$}}
\def\qed{\ifmmode\squareforqed\else{\unskip\nobreak\hfil
\penalty50\hskip1em\null\nobreak\hfil\squareforqed
\parfillskip=0pt\finalhyphendemerits=0\endgraf}\fi}
\def\endenv{\ifmmode\;\else{\unskip\nobreak\hfil
\penalty50\hskip1em\null\nobreak\hfil\;
\parfillskip=0pt\finalhyphendemerits=0\endgraf}\fi}
\newcommand{\tr}{\text{Tr}}
\newcommand{\ket}[1]{|#1\rangle}
\newcommand{\bra}[1]{\langle#1|}
\newcommand{\la}{\langle}
\newcommand{\ra}{\rangle}
\newcommand{\re}{\color{black}}  
\newcommand{\blk}{\color{black}}
\newtheorem*{rep@theorem}{\rep@title}
\newcommand{\newreptheorem}[2]{%
\newenvironment{rep#1}[1]{%
 \def\rep@title{#2 \ref{##1}}%
 \begin{rep@theorem}}%
 {\end{rep@theorem}}}
\newtheorem{thm}{Theorem}
\newtheorem{lemma}{Lemma}
\newtheorem{definition}{Definition}
\begin{document}

\title{
Communication scenario enabling semi-device-independent robust self-testing of $n$-party Greenberger–Horne–Zeilinger basis measurements
}


\author{Barnik Bhaumik}
\affiliation{School of Physics, Indian Institute of Science Education and Research Thiruvananthapuram, Kerala 695551, India}
\author{Sagnik Ray}
\affiliation{School of Physics, Indian Institute of Science Education and Research Thiruvananthapuram, Kerala 695551, India}
\author{Debashis Saha}
\affiliation{Department of Physics, School of Basic Sciences, Indian Institute of Technology Bhubaneswar, Odisha 752050, India}


\begin{abstract}
Entangled basis measurements play a crucial role in distributing quantum entanglement between parties across a quantum network. In this work we adopt a semi-device-independent approach that enables the self-testing of $n$-qubit Greenberger–Horne–Zeilinger basis measurements without requiring shared entanglement between distant parties. Our method relies solely on input-output statistics from a communication scenario involving $n$ spatially separated senders, each receiving two bits of input, and a single receiver with no input. We analyze the robustness of the proposed self-testing protocol for $n\leq7$. Additionally, we introduce a protocol for robust self-testing of the three-outcome partial Bell basis measurement that is easily implementable in an optical setup.

\end{abstract}

\maketitle

\section{Introduction}

The quantum network represents a major milestone in the ongoing quantum revolution, offering capabilities that significantly surpass classical networks in terms of communication efficiency, security, and distributed information processing \cite{simon2017towards, kimble2008quantum,PhysRevLett.74.1259,PhysRevLett.83.432}. An essential requirement for realizing a quantum network is to certify entangled basis measurements, which serve as an essential component in distributing quantum entanglement.

Self-testing has emerged as a powerful technique for certifying quantum resources in a device-independent manner \cite{Supic2020selftestingof}. It allows one to infer the underlying quantum state, measurement, or processes solely from observed statistics, without requiring any trust in the internal functioning of the devices. Specifically, self-testing exploits extremal quantum correlations—typically those that maximally violate Bell inequalities—to uniquely determine the state and measurements up to local isometries. The foundational work by Mayers and Yao demonstrated that the maximal violation of the Clauser-Horne-Shimony-Holt (CHSH) inequality \cite{PhysRevLett.23.880} certifies, in a device-independent manner, that two parties share a singlet state up to local isometries \cite{Mayers:2003xif}. Since then, the self-testing of multipartite entangled quantum states has gained significant attention in Bell scenarios \cite{Supic2020selftestingof,Kan, PhysRevA.87.050102, C, b, bamps, sarkar, upi, Santos, Panwar, genuine, debashis, Kanie, Su, Vo, man, li}.
In contrast, self-testing of multipartite entangled basis measurements is subtler and inherently necessitates complex configurations with multiple independent sources. Previous works have demonstrated self-testing of entangled measurements in entanglement swapping scenarios within the multiparty network configurations \cite{Rabelo_2011,Renou-prl, Bancal-prl, Zhou-pra,sarkar2024universalselftest}.

Device-independent approaches, whether for certifying entangled states or measurements, rely on establishing entanglement between spatially separated parties---a task that remains experimentally demanding. To address this challenge, semi-device-independent approaches have been developed for self-testing of single quantum states or measurements \cite{supic, PhysRevA.98.022311, PhysRevResearch.3.033093, PhysRevA.106.L040402, PhysRevApplied.19.034038, PhysRevA.98.062307, Mate-pra-2019, PhysRevLett.122.250403, Saha2020sumofsquares, Das2022robustcertification, PhysRevLett.132.140201, baroni2025translatingbellnonlocalityprepareandmeasure, PhysRevLett.131.250802, Metger2021selftestingofsingle, PhysRevA.104.022212, PRXQuantum.3.030344}. These methods rely on minimal and general assumptions about the uncharacterized devices. This naturally leads to an important question: Can one self-test entangled basis measurements without requiring shared entanglement between distant parties?

To address this, we investigate the multiparty communication scenario comprising multiple senders and a single receiver. Each sender communicates unknown quantum systems to the receiver, who then performs an uncharacterized measurement. Our goal is to certify, up to local unitaries, that the receiver's measurement corresponds to the entangled basis comprised of Greenberger-Horne-Zeilinger (GHZ) states. The only assumption made is an upper bound on the dimension of the communication systems. There has been established evidence that measurements of entanglement bases are required to achieve certain quantum correlations that cannot be achieved using any product measurements or classical systems \cite{Vertesi-pra,Marcin-pra,expt-prl,chakraborty2024overcomingtraditionalnogotheorems,qfp}. This work marks a significant step forward by introducing a class of quantum communication tasks in which the optimal quantum performance self-tests the $n$-party GHZ basis measurement.

The paper is organized as follows. We begin in Sec. \ref{Self-testing in Communication Scenario} by describing the general communication scenario involving multiple spatially separated senders and a single receiver, and outline how the observed input-output statistics can be used for self-testing. We then discuss why not all extremal quantum correlations in such communication settings imply self-testing. 
The communication task used for self-testing is then introduced, and a specific case involving two senders and one receiver is explicitly demonstrated. Sec. \ref{self-testing GHZ measurments} is dedicated to proving the self-testing of the $n-$party GHZ measurement from the maximum value of the success metric of the proposed communication task. In Sec. \ref{Robust Self-testing of partial Bell basis measurements} we analyze the robustness of the self-testing protocol \re for $n\leq7$ \blk.
Motivated by experimental feasibility, we then modify the task to design a protocol capable of self-testing a three-outcome partial Bell basis measurement that can be readily implemented in an optical setup. We conclude in Sec. \ref{conclusion} with a discussion of the broader implications, limitations, and potential directions for future research.

\section{Self-testing in a Communication Scenario}\label{Self-testing in Communication Scenario}
 
We consider a quantum communication scenario that involves $n$ spatially separated senders $\{A^{(j)}\}_{j=1}^n$ and a single receiver $R$. The $j$-th sender receives an input $y_j$ and encodes it by preparing a $d$-dimensional quantum state $\rho^{(j)}_{y_j}$, which is then transmitted to the receiver. In this work, we consider a scenario in which the receiver receives an input $k$. The receiver performs a joint measurement on the composite quantum state $\bigotimes_j\rho^{(j)}_{y_j}$ to produce an output $s$. The resulting input-output statistics are given by the conditional probabilities, 
\begin{equation}
p(s|\vec{y},k) = \tr\Big(\bigotimes_{j}\rho^{(j)}_{y_j}\mathcal{M}_s^k\Big),
\end{equation}
where $\vec{y} = (y_1,\cdots,y_n)$ denotes the set of inputs received by the senders, and $\{\mathcal{M}_s^k\}_s$ is a set of positive semi-definite operators describing the measurement by the receiver corresponding to the $k$-th input, satisfying the completeness relation $\sum_s \mathcal{M}_s^k = \mathbbm{1}$, for each $k$. 

To characterize the quantum behavior in this communication scenario, one typically considers a success metric that is a linear combination of such probabilities,
\begin{equation}\label{success metric}
 \mathcal{S}=\sum_{\vec{y},s,k}\alpha_{\vec{y},s,k}p(s|\vec{y},k)  ,
\end{equation}
where $\alpha_{\vec{y},s,k}\in\mathbb{R}.$ For a fixed system dimension $d$, quantum systems can attain certain values of $\mathcal{S}$ that are impossible to replicate using classical systems of the same dimension, even when allowing shared classical randomness between parties \cite{Vertesi-pra,Marcin-pra,expt-prl,chakraborty2024overcomingtraditionalnogotheorems,qfp}. This phenomenon is referred to as the quantum advantage over classical communication. 

However, our primary aim is to demonstrate that the optimal value of $\mathcal{S}$ for a certain communication task (as we will see, it has no input on the receiver's end) can only arise from a specific entangled basis measurement performed by the receiver, up to local unitary transformations.
\begin{definition}
     Self-testing of a reference entangled basis measurement $\{\ket{\xi_{s}}\}_s$ (where $\ket{\xi_s}$ are entangled states) from the optimal value of $\mathcal{S}$ asserts the existence of a set of unitaries $\{U_j\}_j$ such that the unknown measurement $\{\mathcal{M}_s\}_s$ performed in the receiver satisfies:
     \begin{equation}
         \left(\otimes_j U_j\right) \mathcal{M}_s   \left(\otimes_j U_j\right)^\dagger = \ket{\xi_{s}}\!\bra{\xi_s}\ \forall s.
     \end{equation}
\end{definition}
This definition captures the essence of self-testing that allows one to characterize an unknown quantum measurement as an ideal target measurement, without requiring detailed knowledge of the internal workings of the measurement device. Although the main focus of this work is the self-testing of the measurement at the receiver's end, our results also encompass the self-testing of the quantum states prepared by the senders. Since we assume an upper bound on the dimension of the communicated systems, the self-testing holds in the semi-device-independent regime. It is also important to note that one can efficiently estimate a lower bound on the fidelity between the implemented (uncharacterized) measurement and the ideal target measurement. This provides a quantitative means to assess the robustness of the self-testing protocol.

\subsection{Optimal quantum advantage does not always imply self-testing} \label{no_advantage}
However, before diving into a detailed proof of the self-testing protocol, let us first consider an instance where an optimal quantum advantage does \textit{not} implicitly suggest that we can certify entangling measurements or, in other words, self-test them. This example has been provided here to emphasize the novelty of the communication task discussed in the next section, which enables us to self-test GHZ basis measurements. Consider a scenario where there are two spatially separated senders $A^{(1)}$ and $A^{(2)}$ and each gets an input $y_1\in\{1,2,3\}$ and $y_2\in \{1,2,3\}$ respectively. Depending on their input, they can send a qubit state $\{\rho_{y_j}^{(j)}\}_{y_j=1}^{3}\in\mathbb{C}^2 \textrm{ }\forall j=1,2$ to the receiver. The receiver can perform a two-outcome measurement $\{\mathcal{M}_0,\mathcal{M}_1\}$ on the composite state and generate probability statistics as
\begin{equation}
    p(s|y_1,y_2)= \tr\left[\left(\rho_{y_1}^{(1)} \otimes \rho_{y_2}^{(2)}\right) \mathcal{M}_s\right], \quad s=0,1.
\end{equation}
Let us consider the success metric, 
\begin{equation} \label{S_no_self_test}
\begin{split}
\mathcal{S}=&-2[p(0|1,1)-p(0|1,3)+p(0|2,1)]\\ &+p(0|2,2)-p(0|2,3)+p(0|3,2)-p(0|3,3),
\end{split}
\end{equation}
which corresponds to a facet inequality of the polytope characterizing the set of classical correlations in this scenario \cite{Vertesi-pra}. We observe that both unentangled and entangled measurements can surpass the classical bound associated with this metric. Notably, while entangled measurements generally provide an advantage over classical strategies, there exist instances where unentangled measurements achieve the same level of success as their entangled counterparts. This suggests that the advantage of entanglement, while present, may not always manifest in the maximum value for this specific metric and in turn cannot be self-tested.

The maximum possible value of $\mathcal{S}\approx2.8284$, and there exist two different measurements, one entangling, $\mathcal{M}_0^{\textrm{ent}}$, and another nonentangling, $\mathcal{M}_0^{\textrm{nonent}}$ that result in $\mathcal{S}$ achieving the aforementioned value. The entangling measurement can be expressed as
\begin{equation}
    \mathcal{M}_0^{\textrm{ent}}=\ket{\Psi}\!\bra{\Psi}+\ket{\Psi^\perp}\!\bra{\Psi^\perp},
\end{equation}
where $\ket{\Psi}=\lambda_1\ket{00}+\lambda_2\ket{11}$ with $(\lambda_1,\lambda_2)\approx(0.9413,0.3375)$ and  with $\ket{\Psi^\perp}=\tilde{\lambda}_1\ket{\vec{n}}\ket{\vec{m}}+\tilde{\lambda}_2\ket{-\vec{n}}\ket{-\vec{m}}$ with $(\tilde{\lambda}_1,\tilde{\lambda}_2)\approx(0.9240,0.3357)$. The states $\ket{\pm\vec{n}}$ and $\ket{\pm\vec{m}}$ are the eigenstates corresponding to $\pm1$ eigenvalues of $\vec{n}.\vec{\sigma}$ and $\vec{m}.\vec{\sigma}$ respectively where $\vec{n}=(\sin(\theta)\cos(\phi),\sin(\theta)\sin(\phi),\cos(\theta))$ and $\vec{m}=(\sin(\theta')\cos(\phi'),\sin(\theta')\sin(\phi'),\cos(\theta'))$ with $\theta\approx179.61^o, \phi\approx354.23^o$ and $\theta'\approx48.93^o, \phi'\approx116.69^o$. One can check the entanglement by applying positive partial transpose
\cite{PhysRevLett.77.1413}. The optimal messages in this case are $\rho_{y_1}^{(1)}=\ket{\psi_{y_1}}\!\bra{\psi_{y_1}}$ and $\rho_{y_2}^{(2)}=\ket{\overline{\psi}_{y_2}}\!\bra{\overline{\psi}_{y_2}}\quad \forall y_1,y_2\in\{1,2,3\}$. The polar and azimuthal angles of every such $\ket{\psi_{y_1}}=\cos(\theta_{y_1}/2)\ket{0}+e^{\iota\phi_{y_1}}\sin(\theta_{y_1}/2)\ket{1}$ and $\ket{\overline{\psi}_{y_2}}=\cos(\overline{\theta}_{y_2}/2)\ket{0}+e^{\iota\overline{\phi}_{y_2}}\sin(\overline{\theta}_{y_2}/2)\ket{1}$ are listed in Table \ref{tab:merged_mixed_headers}.

\vspace{0.5cm}

\begin{table}[h!]
\centering
\begin{tabular}{lcc|lcc}
  \toprule
  $A^{(1)}$ & $\theta_i$ & $\phi_i$ & $A^{(2)}$ & $\overline{\theta}_i$ & $\overline{\phi}_i$ \\
  \midrule
  $\psi_1$           & $118.05^\circ$ & $0^\circ$ & $\overline{\psi}_1$ & $151.45^\circ$ & $287.40^\circ$ \\
  $\psi_2$           & $125.58^\circ$  & $243.78^\circ$ & $\overline{\psi}_2$ & $69.76^\circ$  & $116.28^\circ$ \\
  $\psi_3$           & $125.73^\circ$  & $244.07^\circ$ & $\overline{\psi}_3$ & $65.49^\circ$  & $296.09^\circ$ \\
  \bottomrule
\end{tabular}
\caption{Messages sent by $A^{(1)}$ and $A^{(2)}$ when measurement is entangling.}
\label{tab:merged_mixed_headers}
\end{table}

The same value of the success metric is also achieved by a separable measurement $\mathcal{M}_0^{\textrm{nonent}}$
\begin{equation}
    \mathcal{M}_0^{\textrm{nonent}}=\ket{0}\!\bra{0}\otimes\ket{u}\!\bra{u}+\ket{1}\!\bra{1}\otimes\ket{0}\!\bra{0},
\end{equation}
where \re $\ket{u}=\cos(\theta/2)\ket{0}+e^{\iota\phi}\sin(\theta/2)\ket{1}$ with $\theta\approx46.58^\circ$ and $\phi\approx208.09^\circ$. The messages sent by the senders are listed in Table \ref{tab:merged_mixed_headers1}. \blk

\vspace{0.5cm}

\begin{table}[h!]
\centering
\begin{tabular}{lcc|lcc}
  \toprule
  $A^{(1)}$ & $\theta_i$ & $\phi_i$ & $A^{(2)}$ & $\overline{\theta}_i$ & $\overline{\phi}_i$ \\
  \midrule
  $\psi_1$           & \re$55.95^\circ$\blk & \re$0^\circ$ \blk & $\overline{\psi}_1$ & \re$127.29^\circ$\blk & \re$59.56^\circ$ \blk\\
  $\psi_2$           & \re$123.99^\circ$ \blk & \re$208.09^\circ$\blk & $\overline{\psi}_2$ & \re$21.71^\circ$\blk  & \re $28.09^\circ$ \blk\\
  $\psi_3$           & \re$123.99^\circ$ \blk & \re$208.09^\circ$\blk & $\overline{\psi}_3$ & \re$113.28^\circ$ \blk & \re $208.09^\circ$ \blk \\
  \bottomrule
\end{tabular}
\caption{Messages sent by $A^{(1)}$ and $A^{(2)}$ when measurement is separable.}
\label{tab:merged_mixed_headers1}
\end{table}

Thus, as claimed, since local unitaries cannot connect the entangling and nonentangling measurements, it implies that the entangling measurement cannot be self-tested using the optimal success metric of this communication scenario.
\section{Self-Testing of GHZ measurements} \label{self-testing GHZ measurments}

Before delving into self-testing of GHZ basis measurements, we first introduce the underlying communication task and define the metric used to evaluate its success.
\vspace{0.5cm}

\subsection{Communication task} \label{comm task}
The communication scenario involves $n$ spatially separated senders, denoted by $\{A^{(1)},A^{(2)},\cdots,A^{(n)}\}$, and a receiver $R$. Each sender $A^{(j)}$ receives two inputs $x_j$ and $a_j$, where $x_j$ and $a_j$ $\in \{0,1\}$.  The index $j$ runs over the set of positive integers, i.e. $j\in$ $\{1,2,\cdots,n\}$. Based on the combination of inputs, they prepare and send messages $m_d(x_j,a_j)$ (classical or quantum), where $d=2$, to the receiver, who can output an $n$-bit binary number $s=s_ns_{n-1}\cdots s_2s_1$ where each bit $s_k$ is either $0$ or $1$ $\forall k \in \{1,\cdots,n\}$, as shown in FIG.~\ref{fig:enter-label}. Thus, in total, there are $2^n$ possible outputs.  After having repeated the task for several rounds, the receiver gathers correlation statistics, $p(s\ |\ \vec{a} ;\vec
{x})$ where $\vec{a}=(a_1,a_2,\cdots,a_n)$ and $\vec{x}=(x_1,x_2,\cdots,x_n)$ .
\begin{figure}[h] 
    \centering
    \includegraphics[width=0.5\textwidth]{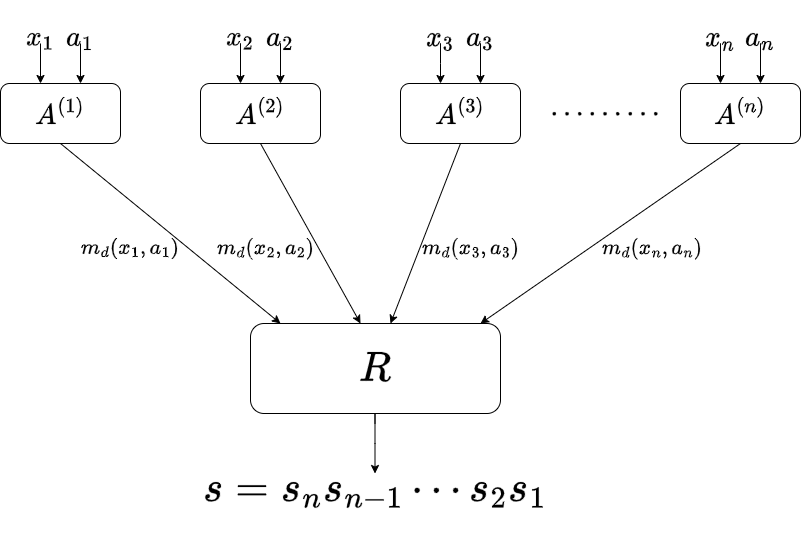}
    \caption{Schematic diagram of a multi-party communication scenario in which multiple senders $\{A^{(j)}\}_j$, each transmits a message $m_d(x_j,a_j)$, with $d=2$ in our case, determined by the respective  inputs $(x_j,a_j)$ where $j\in\{1,2,\cdots,n\}$. These messages are sent to the receiver, who produces an $n$-bit binary output.}
    \label{fig:enter-label}
\end{figure}

Once all correlation statistics are collected, the receiver calculates the success metric, which evaluates the overall performance of the communication protocol. The success metric is defined as
\be \label{SGen}
\mathcal{S} = \frac{1}{2^{n} (n-1) 2\sqrt{2}} \sum_s W_{s} \quad \textrm{where,}
\ee 

\begin{widetext}
\begin{equation} \label{SGen1}
\begin{split}
W_{s}=(n-1) (-1)^{s_1}&\sum_{a_1,\cdots,a_n=0,1}\left(-1\right)^{\bigoplus_{j=1}^n a_j}\Big[p(s\ |\ \vec{a};x_1=\cdots=x_n=0)+p(s\ |\ \vec{a};x_1=1,x_2=\cdots=x_n=0)\Big] +\\  &\sum_{j=2}^n (-1)^{s_j} \sum_{a_1,a_j=0,1}(-1)^{a_1\oplus a_j}\Big[p(s\ |\ a_1,a_j;x_1=0,x_j=1)-p(s\ |\ a_1,a_j;x_1=x_j=1) \Big].
\end{split}
\end{equation}
\end{widetext}
The success metric $\mathcal{S}$ is normalized so that its maximum quantum value is ensured to 1. (\re Note that any inputs $a_{j'}$ and $x_{j'}$ for $j'\neq1,j$ are irrelevant for the terms, $p(s\ |\ a_1,a_j;x_1=0,x_j=1)$and $p(s\ |\ a_1,a_j;x_1=x_j=1)$, where $j\in\{2,\cdots,n\}$)\blk. 

As an explicit example, let us consider a two sender-one receiver scenario where the two senders are spatially separated and sends their messages to the receiver. In this case, Eq.\eqref{SGen} can be reduced to, 
\be \label{SGen1}
\mathcal{S} = \frac{1}{8\sqrt{2}} \sum_s W_{s},
\ee
where $s=s_2s_1$ can take four values,$\{00,01,10,11\}$ and $W_s$ is defined as
\begin{widetext} \label{chsh bell 2 sender}
    \begin{equation}
    \begin{split}
        W_s=\sum_{a_1,a_2=0,1}&(-1)^{a_1 \oplus a_2}\Big[(-1)^{s_1}[p(s|a_1,a_2;x_1=x_2=0)+p(s|a_1,a_2;x_1=1,x_2=0)]\\
       &+(-1)^{s_2}[p(s|a_1,a_2;x_1=0,x_2=1)-p(s|a_1,a_2;x_1=x_2=1)]\Big].
       \end{split}
    \end{equation}
\end{widetext}

Since the dimension of the communicated systems is restricted to $2$, when the senders use quantum states to encode their inputs, they send qubit states. For example, say the sender $A^{(j)}$ receives the inputs $\{a_j,x_j\}$, then the sender will send the state (mixed or pure) $\rho^{(j)}_{a_j|x_j}\in\mathbb{C}^2$. In the case of an $n$-party communication protocol, the measurement set consists of $2^n$ distinct elements, ensuring a complete set of possible measurement outcomes as an $n$-bit binary number $s$. Thus, the set $\{\mathcal{M}_s\}_{s}$ forms a valid positive operator valued measurement (POVM) and the term $W_s$ in (\ref{SGen}) can be written succinctly as $W_s=\tr(\mathcal{M}_s\mathcal{W}_s)$ where $\mathcal{W}_s$ is defined as
\begin{widetext}
\begin{equation} \label{w_i_def}
    \mathcal{W}_s=(n-1)(-1)^{s_1}\left(\mathcal{A}_0^{(1)}+\mathcal{A}_1^{(1)}\right)\otimes\bigotimes_{j=2}^n\mathcal{A}_0^{(j)}+ \sum_{j=2}^n (-1)^{s_{j}}\left(\mathcal{A}_0^{(1)}-\mathcal{A}_1^{(1)}\right)\otimes\mathcal{A}_1^{(j)},
\end{equation}
 with $\mathcal{A}^{(j)}_{x_j}$ defined as, $\mathcal{A}^{(j)}_{x_j}=\rho^{(j)}_{0|x_j}-\rho^{(j)}_{1|x_j}$.
    
\end{widetext}
In a communication task involving two senders and a receiver, Eq. \eqref{SGen} simplifies to
\begin{equation}\label{CHSH operator}
    \mathcal{S}=\frac{1}{8\sqrt{2}}\sum_{s}{W}_{s},
\end{equation}
where the operator $\mathcal{W}_s$ in \eqref{w_i_def} reduces to
\begin{equation} \label{chsh OPERATORS}
\begin{split}
\mathcal{W}_s=&(-1)^{s_{1}}(\mathcal{A}_{0}^{(1)}+ \mathcal{A}_1^{(1)})\otimes\mathcal{A}_0^{(2)}\\&+(-1)^{s_{2}}(\mathcal{A}_0^{(1)}-\mathcal{A}_1^{(1)})\otimes\mathcal{A}_1^{(2)}.
 \end{split}
\end{equation}  

\re Each $\mathcal{W}_s$ in \eqref{CHSH operator} resembles a CHSH-Bell operator; however, in our case, each term involves a combination of density operators rather than measurement operators. Moreover, the general expression for  $\mathcal{W}_s$ defined in \eqref{w_i_def} coincides with Bell inequalities that is maximally violated by qubit graph states as discussed in \cite{Baccari_2020}.  \blk

Having described the communication scenario, we now proceed with presenting the self-testing protocol.
\vspace{-0.01cm}

\subsection{Ideal self-testing of GHZ measurements}
\begin{thm}\label{th1}
    The optimal quantum value of $\mathcal{S}$ in \eqref{SGen} is $1$. If this value is achieved from an unknown set of qubit states $\{\rho^{(j)}_{a_j|x_j}\}_{a_j,x_j}$ and an unknown measurement $\{\mathcal{M}_s\}_s$ on the $n$-qubit system, then there exists a set of qubit unitaries $\{U_j\}_j$ such that
\begin{equation}\label{first set}
        \begin{split}
            &U_{1}  \rho^{(1)}_{0|0}  U_{1}^\dag =|\beta_{+}\rangle\!\langle \beta_{+}|, \ \  U_{1}  \rho^{(1)}_{0|1}  U_{1}^\dag =|\alpha_{+}\rangle\!\langle \alpha_{+}|, \\
          &U_{1}  \rho^{(1)}_{1|0} U_{1}^\dag =|\beta_{-}\rangle\!\langle \beta_{-}|,
            \ \ U_{1}  \rho^{(1)}_{1|1}  U_{1}^\dag =|\alpha_{-}\rangle\!\langle \alpha_{-}|,
            \end{split}
        \end{equation}
        for $j\neq 1$
         \begin{equation}\label{second set}
         \begin{split}
        &U_{j}  \rho^{(j)}_{0|0}  U_{j}^\dag =|0\rangle\!\langle 0|, \ \ U_{j}  \rho^{(j)}_{0|1}  U_{j}^\dag =|+\rangle\!\langle +|, \\
            & U_{j}  \rho^{(j)}_{1|0} U_{j}^\dag =|1\rangle\!\langle 1|, \ \  U_{j}  \rho^{(j)}_{1|1}  U_{j}^\dag =|-\rangle\!\langle -|,
            \end{split}
        \end{equation}
 and 
\begin{equation}
\left(\otimes_jU_j\right)\mathcal{M}_s\left(\otimes_j U_j\right)^{\dagger}=\ket{\xi_s}\!\bra{\xi_s},
\end{equation}
where
        \begin{equation}
        \begin{split}
            &|\beta_{+}\rangle=\cos \frac{\pi}{8} |0\rangle +\sin \frac{\pi}{8} |1\rangle, \ |\beta_{-}\rangle=\sin \frac{\pi}{8} |0\rangle - \cos \frac{\pi}{8} |1\rangle, \\
            &|\alpha_{+}\rangle=\sin \frac{\pi}{8} |0\rangle + \cos \frac{\pi}{8} |1\rangle, \ |\alpha_{-}\rangle=\cos \frac{\pi}{8} |0\rangle - \sin \frac{\pi}{8} |1\rangle , \nonumber 
        \end{split}
        \end{equation}
and $\ket{\xi_s}=\frac{1}{\sqrt{2}}\Big(\ket{0 \textrm{ } s_2 \textrm{ }\cdots \textrm{ } s_n}+(-1)^{s_1}\ket{1 \textrm{ }\overline{s_2} \textrm{ }\cdots \textrm{ }\overline{s_n}}\Big)$ where $\overline{s}_j$ is the compliment of $s_j\in \{0,1\}$. 
\end{thm}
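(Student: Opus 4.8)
The plan is to first establish the operator inequality behind ``$\mathcal S\le 1$'' and then extract the self-testing statement from its saturation. Writing $W_s=\tr(\mathcal M_s\mathcal W_s)$ with $\mathcal W_s$ as in \eqref{w_i_def}, the first move is to split it into $n-1$ two-body CHSH-type pieces $\mathcal W_s=\sum_{j=2}^{n}\mathcal C_j$, where
\begin{equation}
\mathcal C_j=(-1)^{s_1}\big(\mathcal A_0^{(1)}+\mathcal A_1^{(1)}\big)\otimes\!\bigotimes_{k\ge 2}\mathcal A_0^{(k)}+(-1)^{s_j}\big(\mathcal A_0^{(1)}-\mathcal A_1^{(1)}\big)\otimes\mathcal A_1^{(j)}.
\end{equation}
Each $\mathcal C_j$ is a genuine CHSH operator between qubit $1$, carrying the observables $\mathcal A_0^{(1)},\mathcal A_1^{(1)}$, and the remaining qubits, carrying the effective observables $\tilde B_0^{(j)}=(-1)^{s_1}\bigotimes_{k\ge2}\mathcal A_0^{(k)}$ and $\tilde B_1^{(j)}=(-1)^{s_j}\mathcal A_1^{(j)}$. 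Since every $\mathcal A^{(k)}_{x}=\rho^{(k)}_{0|x}-\rho^{(k)}_{1|x}$ is Hermitian with $\|\mathcal A^{(k)}_{x}\|\le 1$, all four effective observables have norm at most $1$, the ones on qubit $1$ commute with the ones on the rest, and $(\tilde B^{(j)}_{b})^2\preceq\I$.

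The second step applies the (sub-normalised) CHSH sum-of-squares bound. Nonnegativity of $\big(\mathcal A_0^{(1)}-\tfrac{\tilde B_0^{(j)}+\tilde B_1^{(j)}}{\sqrt2}\big)^2+\big(\mathcal A_1^{(1)}-\tfrac{\tilde B_0^{(j)}-\tilde B_1^{(j)}}{\sqrt2}\big)^2\succeq0$ expands, using the commutation of the two factors, to $\sqrt2\,\mathcal C_j\preceq(\mathcal A_0^{(1)})^2+(\mathcal A_1^{(1)})^2+(\tilde B_0^{(j)})^2+(\tilde B_1^{(j)})^2\preceq4\,\I$, hence $\mathcal C_j\preceq2\sqrt2\,\I$. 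Summing over $j$ gives $\mathcal W_s\preceq(n-1)2\sqrt2\,\I$ for every $s$. As $\mathcal M_s\succeq0$ and $\sum_s\mathcal M_s=\I$ on the $2^n$-dimensional space, $\sum_s W_s\le(n-1)2\sqrt2\,\tr\!\big(\sum_s\mathcal M_s\big)=(n-1)2\sqrt2\cdot2^{n}$, i.e. $\mathcal S\le1$. Optimality ($\mathcal S=1$) follows by substituting the reference preparations \eqref{first set}-\eqref{second set} — for which $\mathcal A_0^{(1)}+\mathcal A_1^{(1)}=\sqrt2\,\sigma_x$, $\mathcal A_0^{(1)}-\mathcal A_1^{(1)}=\sqrt2\,\sigma_z$, $\mathcal A_0^{(j)}=\sigma_z$, $\mathcal A_1^{(j)}=\sigma_x$ — together with $\mathcal M_s=\ket{\xi_s}\!\bra{\xi_s}$.

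For the converse, $\mathcal S=1$ combined with $\mathcal W_s\preceq(n-1)2\sqrt2\,\I$ forces $\tr\!\big(\mathcal M_s[(n-1)2\sqrt2\,\I-\mathcal W_s]\big)=0$ for each $s$; as both factors are positive semidefinite, $\mathrm{supp}(\mathcal M_s)$ lies in the top eigenspace $V_s$ of $\mathcal W_s$, on which every $\mathcal C_j$ is simultaneously saturated. Saturation of the CHSH sum of squares on $V_s$ yields, first, $(\mathcal A^{(k)}_{x})^2=\I$ on $V_s$ (so each prepared pair $\rho^{(k)}_{0|x},\rho^{(k)}_{1|x}$ is pure and orthogonal, fixing \eqref{first set}-\eqref{second set} up to a local unitary), and second the anticommutations $\{\mathcal A_0^{(1)},\mathcal A_1^{(1)}\}=0$ and $\{\tilde B_0^{(j)},\tilde B_1^{(j)}\}=0$ on $V_s$; the latter, since $\bigotimes_{k\ne j}\mathcal A_0^{(k)}$ is invertible and commutes with $\mathcal A_1^{(j)}$, reduces to $\{\mathcal A_0^{(j)},\mathcal A_1^{(j)}\}=0$. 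Because $\sum_s\mathcal M_s=\I$, the spaces $\{V_s\}$ jointly span the whole Hilbert space, so all of these relations promote to global operator identities. Each factor then carries a pair of anticommuting $\pm1$ qubit observables, which determines local unitaries $\{U_j\}$ mapping $\mathcal A_0^{(1)}+\mathcal A_1^{(1)}\mapsto\sqrt2\,\sigma_x$, $\mathcal A_0^{(1)}-\mathcal A_1^{(1)}\mapsto\sqrt2\,\sigma_z$ and $\mathcal A_0^{(j)}\mapsto\sigma_z,\ \mathcal A_1^{(j)}\mapsto\sigma_x$, establishing \eqref{first set}-\eqref{second set}.

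Finally, in the standardised frame $\mathcal W_s=\sqrt2\big[(n-1)(-1)^{s_1}P+\sum_{j=2}^n(-1)^{s_j}Q_j\big]$ with $P=\sigma_x^{(1)}\prod_{k\ge2}\sigma_z^{(k)}$ and $Q_j=\sigma_z^{(1)}\sigma_x^{(j)}$. These are $n$ mutually commuting, independent $\pm1$ operators, so their joint eigenbasis is an orthonormal basis and the above combination is maximised, at value $(n-1)2\sqrt2$, by the unique joint eigenvector with $P=(-1)^{s_1}$ and $Q_j=(-1)^{s_j}$, namely the GHZ-basis element $\ket{\xi_s}$. Thus $V_s=\mathrm{span}\{\ket{\xi_s}\}$ is one-dimensional, and since the positive operators $\mathcal M_s$ are supported on these orthogonal lines and sum to $\I$, $\big(\otimes_jU_j\big)\mathcal M_s\big(\otimes_jU_j\big)^\dagger=\ket{\xi_s}\!\bra{\xi_s}$. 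The step I expect to be the main obstacle is the bookkeeping of saturation: verifying that the subspace relations combine across all $s$ and $j$ into global identities, that a single family $\{U_j\}$ simultaneously standardises every sender's observables, and that $\{P,Q_j\}$ are independent so that each $V_s$ collapses to a single GHZ vector; the norm bound itself is comparatively routine once the CHSH decomposition is in hand.
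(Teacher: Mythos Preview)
Your argument is correct and reaches the same conclusions as the paper, but by a noticeably different route. For the bound $\mathcal W_s\preceq 2\sqrt2(n-1)\I$, the paper writes a bespoke three-term SOS for $2\sqrt2(n-1)\I-\mathcal W_s$ (with a correction term absorbing the sub-unitality of the $\mathcal A^{(k)}_{x}$), whereas you split $\mathcal W_s$ into $n-1$ CHSH blocks $\mathcal C_j$ and apply the standard CHSH SOS to each; your version is cleaner and makes the role of Tsirelson saturation transparent. For non-degeneracy of the top eigenvalue, the paper computes the explicit ``X-matrix'' form of $\tilde{\mathcal W}_s$ and its characteristic polynomial, while you observe that after standardisation $\mathcal W_s$ is a positive combination of $n$ commuting, algebraically independent sign observables $P,Q_2,\ldots,Q_n$, whose unique joint $(+1)$-eigenvector pins down $V_s$. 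Your stabiliser argument is more conceptual and immediately explains why the top eigenspace is one-dimensional; the paper's calculation is more self-contained but heavier. The promotion from ``on $V_s$'' to global identities is in fact trivial here (and worth saying): since each $\mathcal A^{(k)}_{x}$ is traceless on a qubit, $(\mathcal A^{(k)}_{x})^2$ and $\{\mathcal A^{(k)}_0,\mathcal A^{(k)}_1\}$ are scalars, so vanishing on any nonzero $V_s$ already forces the global relation. One small bookkeeping slip: with the convention $\mathcal A_0^{(j)}\!\mapsto\!\sigma_z,\ \mathcal A_1^{(j)}\!\mapsto\!\sigma_x$ that you (and the theorem statement) use, the joint $(P,Q_j)$-eigenvector is the Hadamard-on-qubits-$2,\ldots,n$ image of $\ket{\xi_s}$, not $\ket{\xi_s}$ itself; the paper's proof silently switches to $\mathcal A_0^{(j)}\!\mapsto\!\sigma_x,\ \mathcal A_1^{(j)}\!\mapsto\!\sigma_z$, under which the eigenvector is literally $\ket{\xi_s}$. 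This is a harmless relabeling (absorbed into the choice of $U_j$) and does not affect the self-testing conclusion.
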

\vspace{-0.43 cm}
\begin{proof} 
The proof begins by establishing that when $\mathcal{S}=1$, that is when $\tr(\mathcal{M}_s\mathcal{W}_s)=2\sqrt{2}(n-1)$ $\forall s$, the set of unknown qubit messages $\{\rho^{(j)}_{a_j|x_j}\}_{a_j,x_j}$  corresponding to each input $x_j$ must form a collection of pure and mutually orthogonal states. We then demonstrate that the POVM elements $\{\mathcal{M}_s\}_s$ must have unit trace, and that this condition, together with orthogonality, implies the communicated states must be as presented in \eqref{first set} and \eqref{second set}. Finally, we show that the corresponding measurement operators must be maximally entangled.

In the context of self-testing, the shifted $\mathcal{W}_s$ operator with a sum-of-squares (SOS) decomposition is a useful technique to certify quantum states. The shifted $\mathcal{W}_s$ operator involves modifying the $\mathcal{W}_s$ operator in such a way that it becomes a positive semi-definite operator, allowing us to decompose it into a sum of squares of operators. We shift the   $\mathcal{W}_s$ operator by subtracting it from its maximum possible eigenvalue $\beta_Q=2\sqrt{2}(n-1)$ times the identity operator $\mathbbm{1}$ to make it positive semi-definite (see Appendix \ref{sos decomposition} for the proof of $\beta_Q=2\sqrt{2}(n-1)$)
\begin{equation}\label{shiifted bell operator}
\beta_Q\mathbbm{1}-\mathcal{W}_s=\sum_{k}\mathcal{O}_{sk}^\dag \mathcal{O}_{sk},
\end{equation}
where each $\mathcal{O}_{sk}$ is an operator constructed from the messages sent by the senders. The choice of terms make it evident that the entire operator is non-negative. The SOS decomposition consists of three operators $\{\mathcal{O}_{sk}\}_{k=1}^3$, where
\begin{equation} \label{sum square decomp}
    \begin{split}
        &\mathcal{O}_{s1}^\dag\mathcal{O}_{s1}=\frac{n-1}{\sqrt{2}}(\mathbbm{1}-\mathcal{P}_{1,s})^{2},\\
        &\mathcal{O}_{s2}^\dag\mathcal{O}_{s2}=\frac{1}{\sqrt{2}}\sum_{j=2}^{n}(\mathbbm{1}-\mathcal{P}_{j,s})^2,\\
        &\mathcal{O}_{s3}^\dag\mathcal{O}_{s3}=\sqrt{2}(n-1)(\mathbbm{1}-\frac{1}{2(n-1)}((n-1)\mathcal{P}_{1,s}^2+\sum_{j=2}^n\mathcal{P}_{j,s}^2)),
    \end{split}
\end{equation}
and $\mathcal{P}_{1,s}$, $\mathcal{P}_{j,s}$ are defined as
\begin{equation}
\begin{split}\label{SOS 1}
        \mathcal{P}_{1,s}&=(-1)^{s_1}\frac{1}{\sqrt{2}}\left(\mathcal{A}_0^{(1)}+\mathcal{A}_1^{(1)}\right)\otimes \bigotimes_{j=2}^n\mathcal{A}_0^{(j)},\\
        \mathcal{P}_{j,s}&=(-1)^{s_j}\frac{1}{\sqrt{2}}\left(\mathcal{A}_0^{(1)}-\mathcal{A}_1^{(1)}\right)\otimes\mathcal{A}_1^{(j)} \quad \text{  $\forall$ $j\in \{2,\cdots,n\}$}.
\end{split}
\end{equation}
Since we have restricted the dimension to 2, we eliminate the need for any local isometries, allowing us to directly verify the state without additional transformations. We know that the operators $\mathcal{A}^{(j)}_{x_j}$ as present in \eqref{w_i_def} are the difference between two states for a fixed $x_j$ but different values of $a_j$ which correspond to the messages sent by the sender $A^{(j)}$. It is proved in Appendix \ref{sos decomposition} that to have $\lVert \mathcal{W}_s \rVert = 2\sqrt{2}(n-1)$, all the messages sent by senders must be pure states, and for a given $x_j$ the messages corresponding to $a_j=0$ and $1$ must be orthogonal i.e.,
    \begin{equation}
        \langle \psi^{(j)}_{0|x_j}|\psi^{(j)}_{1|x_j} \rangle = 0 \quad \forall j\in \{1,2,\cdots,n\},
    \end{equation}
where $\{\ket{\psi^{(j)}_{a_j|x_j}}\}_{a_j,x_j}$ are the corresponding pure states sent by the senders to the receiver.
    
Since the measurement operators $\mathcal{M}_s$ are positive semi-definite they have a spectral decomposition of the form
\begin{equation}
\label{spec decmp}
\mathcal{M}_s = \sum_{s'} \lambda_{s,s'} |\xi_{s,s'}\rangle\! \langle \xi_{s,s'}|,
\end{equation}
where  $\lambda_{s,s'} >0$ are the eigenvalues, and $|\xi_{s,s'}\rangle$ are the corresponding eigenstates. \re  Here $s'$ represents the index for the $2^n$ basis that is possible for each $s$.\blk As stated before, when self-tested, the expectation value of the  operator equation \eqref{w_i_def} when measured with respect to such a measurement operator $\mathcal{M}_s$ must be equal to $2\sqrt{2}(n-1)$, i.e.,
\begin{equation}
    \mathrm{Tr}(\mathcal{M}_s \mathcal{W}_s)=2\sqrt{2}(n-1).
\end{equation}
This can be restated using (\ref{spec decmp}) as
\begin{equation} \label{measure_1}
\sum_{s'}\lambda_{s,s'}\langle \xi_{s,s'}|\mathcal{W}_s| \xi_{s,s'} \rangle=2\sqrt{2}(n-1).
\end{equation}
We know the maximum eigenvalue of the operator $\mathcal{W}_s$ is $2\sqrt{2}(n-1)$, i.e.,
\begin{equation} \label{measure_2}
    \langle \xi_{s,s'}|\mathcal{W}_s| \xi_{s,s'} \rangle\leq 2\sqrt{2}(n-1),
\end{equation}
which implies that
\begin{equation}
    \sum_{s'}\lambda_{s,s'}\langle \xi_{s,s'}|\mathcal{W}_s| \xi_{s,s'} \rangle\leq 2\sqrt{2}(n-1)\sum_{s'} \lambda_{s,s'}.
\end{equation}
Combining \eqref{measure_1} and \eqref{measure_2} we can infer that
\begin{equation} \label{tr_mi_geq1}
    \begin{split}
    &\sum_{s'} \lambda_{s,s'}\geq 1 \textrm{ } \textrm{ or,}\\
    &\tr(\mathcal{M}_s)\geq 1.
    \end{split}
\end{equation}
In the case of POVMs, the sum of all measurement operators $\mathcal{M}_s$ equals the identity operator $\mathbbm{1}$. For a system with $n$ number of senders, the combined Hilbert space has a dimension of $2^n$. Thus, the sum of traces of the POVM elements is 
\begin{equation}\label{eq28}
   \sum_{s} \mathrm{Tr}(\mathcal{M}_s)=2^n.
\end{equation}
For Eq. \eqref{eq28} to be consistent with \eqref{tr_mi_geq1}, we must necessarily have
\begin{equation}
\begin{split}
    &\mathrm{Tr}(\mathcal{M}_s)=1 \textrm{ or, alternatively,}\\
    &\sum_{s'} \lambda_{s,s'}=1.
    \end{split}
\end{equation}
Tracing out \eqref{shiifted bell operator} by multiplying it with $\mathcal{M}_s$ and knowing that $\tr(\mathcal{M}_s)=1$, we can say that $\tr({\mathcal{M}_s \sum_{k}\mathcal{O}_{sk}^\dag \mathcal{O}_{sk}})=0$, and subsequently, $ \bra{\xi_{s,s'}}\mathcal{O}_{sk}^\dag \mathcal{O}_{sk}\ket{\xi_{s,s'}}=0, \quad \forall s,s',k$ . This, in turn, implies that if the optimal quantum of $\mathcal{S}$ is obtained, then for every $s,s',k,$ 
\begin{equation}\label{eq30}
\mathcal{O}_{sk}\ket{\xi_{s,s'}}=0.
\end{equation}
Substituting $\mathcal{O}_{s1}^{\dag}\mathcal{O}_{s1}$ and $\mathcal{O}_{s2}^{\dag}\mathcal{O}_{s2}$ from \eqref{sum square decomp} and \eqref{SOS 1} into the relation \eqref{eq30}, we obtain
\begin{equation}\label{31}
\begin{split}
    &(-1)^{s_1}[\frac{1}{\sqrt2}(\mathcal{A}_0^{(1)}+\mathcal{A}_1^{(1)})\otimes \bigotimes_{j=2}^{n}\mathcal{A}_0^{(j)}]\ket{\xi_{s,s'}}=\ket{\xi_{s,s'}},\\
    &(-1)^{s_j}[\frac{1}{\sqrt2}(\mathcal{A}_0^{(1)}-\mathcal{A}_1^{(1)})\otimes\mathcal{A}_{1}^{(j)}]\ket{\xi_{s,s'}}=\ket{\xi_{s,s'}}.
\end{split}
\end{equation}
Considering the fact that $\left(\mathcal{A}_{x_j}^{(j)}\right)^2=\mathbbm{1}_2$ and for all $j\in\{1,2,\cdots,n\}$ and $ x_j \in \{0,1\}$ , the Eqs. \eqref{31} are equivalent to 
\begin{equation}\label{action of X and Z}
\begin{split}
    &\mathcal{X}\otimes\mathbbm{1}_{2^{n-1}}\ket{\xi_{s,s'}}=(-1)^{s_1}\mathbbm{1}_2\otimes\bigotimes_{j=2}^{n}\mathcal{A}_0^{(j)}\ket{\xi_{s,s'}},\\
    & \mathcal{Z}\otimes \mathbbm{1}_{2^{n-1}}\ket{\xi_{s,s'}}=(-1)^{s_j}\mathbbm{1}_2\otimes\mathcal{A}_{1}^{(j)}\otimes\mathbbm{1}_{2^{n-2}}\ket{\xi_{s,s'}},
\end{split}
\end{equation}

respectively, where $\mathcal{X}$ and $\mathcal{Z}$ are defined as
\begin{equation}\begin{split}\label{renaming}
    &\mathcal{X}=\frac{1}{\sqrt{2}}(\mathcal{A}_0^{(1)}+\mathcal{A}_1^{(1)}) \quad \textrm{and,}\\
    &\mathcal{Z}=\frac{1}{\sqrt{2}}(\mathcal{A}_0^{(1)}-\mathcal{A}_1^{(1)}).
\end{split}
\end{equation}
We need to show that $\{\mathcal{A}^{(j)}_0,\mathcal{A}^{(j)}_1\}=0$ $\forall j \in \{1,2,\cdots,n\}$. From \eqref{renaming} it is clear that $\{\mathcal{X},\mathcal{Z}\}=0$. Furthermore, Eqs. \eqref{action of X and Z} imply that
\begin{equation}
    \begin{split}
        &\bra{\xi_{s,s'}}\mathcal{X}^2\otimes\mathbbm{1}_{2^{n-1}}\ket{\xi_{s,s'}}=1 \quad \textrm{and,}\\
        &\bra{\xi_{s,s'}}\mathcal{Z}^2\otimes\mathbbm{1}_{2^{n-1}}\ket{\xi_{s,s'}}=1,
    \end{split}
\end{equation}
which results in $\mathcal{X}^2=\mathcal{Z}^2=\mathbbm{1}_2$. Thus, there must exist a single-qubit unitary $U_1$ such that $U_1\mathcal{X}U_1^\dag=\sigma_X$ and $U_1\mathcal{Z}U_1^\dag=\sigma_Z$, and hence, $\{\mathcal{A}^{(1)}_0,\mathcal{A}^{(1)}_1\}=0$.
For $j\in\{2,3,\cdots n\}$, consider the action of $\{\mathcal{X},\mathcal{Z}\}$ on $\ket{\xi_{s,s'}}$, using \eqref{renaming}
\begin{equation}
    (\mathcal{X}\mathcal{Z}+ \mathcal{Z}\mathcal{X})\otimes \mathbbm{1}_{2^{n-1}}\ket{\xi_{s,s'}}=0.
\end{equation}
This equation, along with \eqref{action of X and Z}  implies that, $\{\mathcal{A}_0^{(j)},\mathcal{A}_1^{(j)}\}=0$ for $j\in\{2,3,\cdots n\}$ as well. 

Thus far we have proved the anti-commutation relations and the fact that when the states $\ket{\psi^{(j)}_{0|x_j}}$ and $\ket{\psi^{(j)}_{1|x_j}}$ are orthogonal, the operators $\mathcal{A}^{(j)}_{x_j}$ must be constructed as specific linear combinations of Pauli matrices. This requirement arises because only such combinations can preserve the orthogonality while acting on quantum states in a way that maintains their distinct identities within the Hilbert spaces. Thus, one can always find a set of single-qubit unitaries $\{U_1,U_2,\cdots,U_n\}$ such that
\begin{equation}
      U_1 \mathcal{A}^{(1)}_{x_1} U_1^\dag =\frac{1}{\sqrt{2}}[{\sigma_X}+(-1)^{x_1}\sigma_Z]
\end{equation}
and
\begin{equation}
\begin{split}
        U_j\mathcal{A}^{(j)}_0 U_j^\dag=\sigma_X, \quad U_j\mathcal{A}^{(j)}_1 U_j^\dag=\sigma_Z,
\end{split}
\end{equation}
for all $ j \in \{2,3,\cdots,n\}$.
   
   If the $\mathcal{A}_{x_j}^{(j)}$ operators align to the given set above, the communicated qubit messages must correspond to those as given in \eqref{first set} and \eqref{second set} upto some local unitary transformation.
   
We can now say that the expectation value of the operator $\mathcal{W}_s$ must be equal to $2\sqrt{2}(n-1)$ when measured with respect to any element from the basis set $\{|\xi_{s,s'}\rangle\}_{s'}$, i.e.,
\begin{equation}
    \langle \xi_{s,s'}|\mathcal{W}_s| \xi_{s,s'} \rangle = 2\sqrt{2}(n-1) \textrm{ } \textrm{ $\forall s'$.}
\end{equation}
In other words, $\ket{\xi_{s,s'}}$ are eigenstates of $\mathcal{W}_s$ with $2\sqrt{2}(n-1)$ as their eigenvalue. However, the entire set $\{|\xi_{s,s'}\rangle\}_{s'}$ cannot have $2\sqrt{2}(n-1)$ as its corresponding eigenvalue since $\mathcal{W}_s$ has other eigenvalues as well. In fact, we will show that the eigenvalue, $2\sqrt{2}(n-1)$ is non degenerate and hence for any given $s$, only one $\ket{\xi_{s,s'}}$ corresponds to $2\sqrt{2}(n-1)$, thus making the measurement rank-1 and projective.

As we have shown that $\tr(\mathcal{W}_s\mathcal{M}_s)=2\sqrt{2}(n-1)$ implies  $U_1\mathcal{A}_{x_1}^{(1)}U_1^\dag=1/\sqrt{2}\left(\sigma_X+(-1)^{x_1}\sigma_Z\right)$, and $U_j\mathcal{A}_{0}^{(j)}U_j^\dag=\sigma_X$ and  $U_j\mathcal{A}_{1}^{(j)}U_j^\dag=\sigma_Z$ for $j \in \{2,3,\cdots,n\}$, then if one expresses $\mathcal{W}_s$ in the computational basis, one can write $\left(\bigotimes_j U_j\right)\mathcal{W}_s\left(\bigotimes_j U_j^\dag\right)=\tilde{\mathcal{W}}_s$ as
\begin{equation} \label{X-matrix}
        \tilde{\mathcal{W}}_s= \begin{bmatrix}
           \alpha_1 & 0 & \cdots & 0 & \beta_1\\
           0 & \alpha_2 & \cdots & \beta_2 & 0\\
        \vdots & \cdots  & \vdots & \cdots &\vdots\\
        \beta_{2^n} & 0 & \cdots & 0 & \alpha_{2^n}\\
        \end{bmatrix},
    \end{equation}
where the counter-diagonal elements $\{\beta_j\}_{j=1}^{2^n}$ are all equal to $\sqrt{2}(n-1)(-1)^{s_1}$, and the diagonal elements $\{\alpha_j\}_{j=1}^{2^n}$ are
\begin{equation}
    \begin{matrix}
        \alpha_1=\sqrt{2}\left((-1)^{s_2}+(-1)^{s_3}+\cdots+(-1)^{s_n}\right),\\
        \alpha_2=\sqrt{2}\left((-1)^{s_2}-(-1)^{s_3}+\cdots+(-1)^{s_n}\right),\\
        \vdots\\
        \alpha_{2^{n-1}}=-\sqrt{2}\left((-1)^{s_2}+(-1)^{s_3}+\cdots+(-1)^{s_n}\right),\\
        \alpha_{2^{n-1}+1}=\alpha_{2^{n-1}},\\
        \alpha_{2^{n-1}+2}=\alpha_{2^{n-1}-1},\\
        \vdots\\
        \alpha_{2^n}=\alpha_1.
    \end{matrix}
\end{equation}
The characteristic equation of $\tilde{\mathcal{W}}_s$, $|\tilde{\mathcal{W}}_s-\mu\mathbbm{1}|=0$ assumes the form
\begin{equation} \label{char_eq}
    \prod_{j=1}^{2^{n-1}}\left((\alpha_j-\mu)^2-\left(\sqrt{2}(n-1)(-1)^{s_1}\right)^2\right)=0.
\end{equation}
A more useful way to represent diagonal elements is to write them as $\alpha_{s'_2,s'_3,\cdots,s_n'}=\sqrt{2}\sum_{j=2}^n(-1)^{s_j'\oplus s_j}$ where $s'_j=0 \textrm{ or } 1$ $\forall j\in \{2,\cdots,n\}$. This succinctly captures the fact that there are $2^{n-1}$ unique diagonal elements, and using this representation, we can write the eigenvalues as
\begin{equation} \label{eigv}
    \mu_{s,s'}=\sqrt{2}\sum_{j=2}^n(-1)^{s_j'\oplus s_j}+\sqrt{2}(n-1)(-1)^{s_1' \oplus s_1}.
\end{equation}
The bit $s'_1$ is introduced to take care of the fact that \eqref{char_eq} is a product of difference of squares. This allows us to write $(s_1',s_2',\cdots,s_n')$ as an $n$-bit binary number $s'=s_n's_{n-1}'\cdots s_2's_1'$. It is evident that there are $2^n$ eigenvalues, with $2\sqrt{2}(n-1)$ being the maximum eigenvalue, and it is necessarily non degenerate. That is so because for \eqref{eigv} to yield $2\sqrt{2}(n-1)$, the only way that can happen is if $s'=s$, for any other choice of $s'$, the eigenvalue will be less than $2\sqrt{2}(n-1)$. \re This can be easily seen. Consider, for example, the scenario where $s'\neq s$ which implies at least one $s'_j\neq s_j$ for some $j\in\{1,2,\cdots,n\}$. All those corresponding terms in \eqref{eigv} are negative since $(-1)^{s'_j\oplus s_j}=-1$ whenever $s'_j\neq s_j$. Thus, the only way every term in \eqref{eigv} is positive is if $s'_j=s_j \textrm{ }\forall j$ which implies that only for $s'=s$ one gets the maximum eigenvalue. \blk \\
\setlength{\parskip}{3pt}

The non degeneracy of the maximum eigenvalue implies that the measurement is rank-one projective and unique. Let that measurement be denoted by $\{\ket{\tilde{\xi}_s}\}_s$. Furthermore, let's assume that $\ket{\xi_s}=\left(\bigotimes_jU_j^{\dag}\right)\ket{\tilde{\xi}_s}\left(\bigotimes_jU_j\right)$ is of the form $1/\sqrt{2}\left(\ket{x_1'x_2'\cdots x_n'}\pm \ket{\overline{x}_1'\overline{x}_2'\cdots \overline{x}_n'}\right)$ where $x_j'=0$ or $1$, and $\overline{x}_j'$ is its compliment $\forall j \in \{1,\cdots,n\}$. $\bra{\xi_s}\mathcal{W}_s\ket{\xi_s}$ can then be divided into four terms, two of which we shall call diagonal terms, $(1/2)\bra{x_1'\cdots x_n'}\mathcal{W}_s\ket{x_1'\cdots x_n'}$ and $(1/2)\bra{\overline{x}_1'\cdots \overline{x}_n'}\mathcal{W}_s\ket{\overline{x}_1'\cdots \overline{x}_n'}$ and two of which we shall call the off-diagonal terms, $(1/2)\bra{x_1'\cdots x_n'}\mathcal{W}_s\ket{\overline{x}_1'\cdots \overline{x}_n'}$ and $(1/2)\bra{\overline{x}_1'\cdots \overline{x}_n'}\mathcal{W}_s\ket{x_1'\cdots x_n'}$. A bit of careful observation reveals that the non zero off-diagonal terms come from the $(n-1)\ (-1)^{s_1}\left(\mathcal{A}_0^{(1)}+\mathcal{A}_1^{(1)}\right)\otimes\bigotimes_{j=2}^n\mathcal{A}_0^{(j)}$ part of $\mathcal{W}_s$ regardless of what the $\{x'_j\}_{j=1}^{n}$ are. In contrast, the non zero diagonal terms come from the $\sum_{j=2}^n (-1)^{s_j}\left(\mathcal{A}_0^{(1)}-\mathcal{A}_1^{(1)}\right)\otimes\mathcal{A}_1^{(j)}$ part of $\mathcal{W}_s$ for the case where $x'_j=s_j$ for all $j \in \{2,\cdots,n\}$. Finally, the contributions from the diagonal and off-diagonal terms are added in such a way that it results in $2\sqrt{2}(n-1)$, which depends on $s_{1}$. Combining all these factors, we arrive at
\begin{equation}
    \ket{\xi_s}=\frac{1}{\sqrt{2}}\Big(\ket{0 \textrm{ } s_2 \textrm{ }\cdots  \textrm{ }s_n}+(-1)^{s_1}\ket{1 \textrm{ }\overline{s_2} \textrm{ }\cdots\textrm{ }\overline{s_n}}\Big).
\end{equation}
This completes the ideal self-testing of the messages and the entangled measurement.
\end{proof}
\subsection{Robust Self-testing of the measurements} \label{robustness}
In the preceding subsection, we had considered an ideal self-testing scenario where the maximum quantum value of $\mathcal{S}$ guarantees that the measurements employed must be maximally entangled GHZ basis. However, in practice, the measurements may deviate from the ideal self-tested ones, suggesting that the observed value of $\mathcal{S}$ falls short of the maximum quantum value. This reflects imperfections in measurement implementation and motivates the need to analyze the robustness of the certification of the entangled basis, which we would quantify using fidelity.

 We use a similar method prescribed in \cite{Baccari_2020,Kan} to derive the fidelity limits. Let us recall the basic terminologies discussed in \cite{Kan}. Considering the definition of extractibility, the average fidelity $\mathcal{F}$ for an arbitrary set of measurements $\{\mathcal{M}_s\}_s$, with the ideal measurements $\{\ket{\xi_s}\!\bra{\xi_s}\}_s$, is defined as 
\begin{equation}\label{fidelity}
\mathcal{F}{(\{\mathcal{M}_{s}\}_s)}=\max_{\{\Lambda\}} \frac{1}{2^n} \sum_{s}F(\ket{\xi_s}\!\bra{\xi_s},\Lambda[\mathcal{M}_{s}]),
\end{equation}  where $\Lambda=\bigotimes_{j=1}^n\Lambda^{(j)}$ is a valid local quantum channel. The maximization is taken over all local quantum channels $\{\Lambda\}$. We notice that the fidelity $F$ in \eqref{fidelity} is expressed as
\begin{equation} \label{fidelity of f}  F(\ket{\xi_s}\!\bra{\xi_s},\Lambda[\mathcal{M}_{s}])=\tr{(\mathcal{M}_s\Lambda^\dag[\ket{\xi_s}\!\bra{\xi_s}])},
\end{equation}
where $\Lambda^\dag$ is the dual map of the local quantum channel $\Lambda$. Our aim is to derive a lower bound of $\mathcal{F}$ involving a minimization on the set of measurements, given the value of $\mathcal{S}$.

However, before proceeding further, we take a brief detour. Consider the quantum states prepared by two senders, $\rho^{(1)}_{a_1|x_1}$ and $\rho^{(2)}_{a_2|x_2}$, corresponding to the preparation procedures of the senders, $A^{(1)}$ and $A^{(2)}$, respectively. 

    Taking into account the four POVM measurement operators $\{\mathcal{M}_s\}_s$, \eqref{CHSH operator} can be reformulated as
    \begin{equation} \label{decompo}
    \begin{split}
    \mathcal{S}=\frac{1}{8\sqrt{2}}\tr \Big[\Big(\mathcal{Q}_{00}+\mathcal{Q}_{10}\Big)f_1(\mathcal{M})
    +\Big(\mathcal{Q}_{01}-\mathcal{Q}_{11}\Big)f_2(\mathcal{M})\Big],
    \end{split}
    \end{equation}
    where
    \begin{equation}\label{qx1x2}
        \begin{split}
            \mathcal{Q}_{x_1x_2}=\sum_{a_1,a_2=0,1}(-1)^{a_1 \oplus a_2}\rho^{(1)}_{a_1|x_1}\otimes\rho^{(2)}_{a_2|x_2}, \quad x_1,x_2 \in \{0,1\},
        \end{split}
    \end{equation}
    and
    \begin{equation}
        f_j(\mathcal{M})=\sum_s(-1)^{s_j}\mathcal{M}_s, \quad j\in \{1,2\}.
    \end{equation}
    If one were to take any $\mathcal{Q}_{x_1x_2}$ and write it out explicitly, one would have a linear combination of $4$ terms, $\mathcal{Q}_{x_1x_2}=\rho_{0|x_1}^{(1)}\otimes\rho_{0|x_2}^{(2)}-\rho_{0|x_1}^{(1)}\otimes\rho_{1|x_2}^{(2)}-\rho_{1|x_1}^{(1)}\otimes\rho_{0|x_2}^{(2)}+\rho_{1|x_1}^{(1)}\otimes\rho_{1|x_2}^{(2)}$, such that $\tr(\mathcal{Q}_{x_1x_2}f_j(\mathcal{M}))$ can be written as,
    \begin{equation} \label{tr_Q_f_M}
        \begin{split}
\tr(\mathcal{Q}_{x_1x_2}f_j(\mathcal{M}))&=\tr\Big[\rho_{0|x_1}^{(1)}\otimes\left(\rho_{0|x_2}^{(2)}-\rho_{1|x_2}^{(2)}\right)f_j(\mathcal{M})\Big]\\
&+\tr\Big[\left(\rho_{1|x_1}^{(1)}-\rho_{0|x_1}^{(1)}\right)\otimes\rho_{1|x_2}^{(2)}f_j(\mathcal{M})\Big].
\end{split}
    \end{equation}

    Since $f_j(\mathcal{M})$ are linear combinations of POVM elements, they are Hermitian operators, and thus \eqref{tr_Q_f_M} is maximized when each such $\rho_{a_1|x_1}^{(1)}\otimes\rho_{a_2|x_2}^{(2)}$ are pure and eigenstates of $f_j(\mathcal{M})$. Not only that, $\rho_{0|x_1}^{(1)}\otimes\rho_{0|x_2}^{(2)}$ and $\rho_{1|x_1}^{(1)}\otimes\rho_{1|x_2}^{(2)}$ must correspond to the maximum eigenvalue of $f_j(\mathcal{M})$ while $\rho_{0|x_1}^{(1)}\otimes\rho_{1|x_2}^{(2)}$ and $\rho_{1|x_1}^{(1)}\otimes\rho_{0|x_2}^{(2)}$ must correspond to the minimum eigenvalue. This automatically implies that

    \begin{equation} \label{ortho_cond}
        \tr\left(\rho_{0|x_j}^{(j)}\rho_{1|x_j}^{(j)}\right)=0, \quad j\in\{1,2\}.
    \end{equation}
    
Since $\mathcal{S}$ is a sum of such terms as shown in \eqref{tr_Q_f_M}, for a given POVM set $\{\mathcal{M}_s\}_s$, $\mathcal{S}$ can be maximized only when each sender $A^{(j)}$ for $j\in \{1,2\}$ transmits messages which obey \eqref{ortho_cond}, that is they must correspond to anti-podal Bloch vectors. One can arrive at a similar conclusion for the $n-$senders, one receiver scenario, which has been discussed in Appendix \ref{structure_of_metric}. This implies that while formulating a minimum bound on the fidelity of the measurement, it is sufficient to consider the messages as antipodal in nature. A precise explanation of this has been provided in the proof of the theorem below.

\begin{thm}\label{robust1}
    Given a non-optimal quantum value of $\mathcal{S}\geq 1-\epsilon$ as defined in \eqref{SGen}, one can define a local map $\Lambda=\bigotimes_{j=1}^n\Lambda^{(j)}$ such that one can lower bound the average fidelity of $\{\mathcal{M}_s\}_s$, $\mathcal{F}(\{\mathcal{M}_s\}_s)$, \re given that $n\leq7$, \blk in the following manner,
    \begin{equation} \label{n_input_fid}
        \mathcal{F}\left(\{\mathcal{M}_s\}_s\right)\geq \left(r(n-1)2\sqrt{2}+\mu\right)-r(n-1)2\sqrt{2}\epsilon,
    \end{equation}
    where $r$ and $\mu \in \mathbb{R}$ such that $r(n-1)2\sqrt{2}+\mu=1$. For $n=2$, $r=(4+5\sqrt{2})/16$ and $\mu=-(1+2\sqrt{2})/4$ resulting in
    \begin{equation} \label{2_input_fid}
        \mathcal{F}\left(\{\mathcal{M}_0,\mathcal{M}_1\}\right)\geq 1-\frac{4+5\sqrt{2}}{4\sqrt{2}} \epsilon.
    \end{equation}
\end{thm}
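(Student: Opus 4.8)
The plan is to follow the extractability-based strategy of \cite{Baccari_2020,Kan} and reduce the robustness statement to a single operator inequality for each outcome. Using the reformulation $F_s=\tr\!\big(\mathcal{M}_s\,\Lambda^\dag[\ket{\xi_s}\!\bra{\xi_s}]\big)$ recorded just below \eqref{fidelity}, the objective is to exhibit a local channel $\Lambda=\bigotimes_{j=1}^n\Lambda^{(j)}$ and real constants $r>0,\mu$ such that
\begin{equation}
\Lambda^\dag[\ket{\xi_s}\!\bra{\xi_s}]\succeq r\,\mathcal{W}_s+\mu\,\mathbbm{1}\qquad\forall s,
\end{equation}
where $\mathcal{W}_s$ is the operator in \eqref{w_i_def}. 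Since $\mathcal{M}_s\succeq0$, tracing this inequality against $\mathcal{M}_s$ gives $F_s\geq r\,\tr(\mathcal{M}_s\mathcal{W}_s)+\mu\,\tr(\mathcal{M}_s)=r\,W_s+\mu\,\tr(\mathcal{M}_s)$, which is the whole content of the bound.

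Summing over the $2^n$ outcomes and invoking two facts then closes the argument: by the definition of $\mathcal{S}$ in \eqref{SGen} one has $\sum_s W_s=2^n(n-1)2\sqrt{2}\,\mathcal{S}$, while POVM completeness gives $\sum_s\tr(\mathcal{M}_s)=\tr(\mathbbm{1})=2^n$. Dividing by $2^n$ yields
\begin{equation}
\mathcal{F}(\{\mathcal{M}_s\}_s)\geq\frac{1}{2^n}\sum_s F_s\geq r(n-1)2\sqrt{2}\,\mathcal{S}+\mu,
\end{equation}
and substituting $\mathcal{S}\geq1-\epsilon$ together with the normalization $r(n-1)2\sqrt{2}+\mu=1$ reproduces \eqref{n_input_fid}; the normalization is exactly what enforces $\mathcal{F}\geq1$ at $\epsilon=0$, consistent with Theorem~\ref{th1}. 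A point requiring care is that the inequality need only be established for \emph{antipodal} messages: because $\mathcal{F}$ depends only on $\{\mathcal{M}_s\}_s$, while for fixed measurements $\mathcal{S}$ is maximized by antipodal messages obeying \eqref{ortho_cond} (shown before the theorem and in the cited appendix), one may replace the given states by the $\mathcal{S}$-maximizing antipodal ones without changing $\mathcal{F}$ and without decreasing $\mathcal{S}$. Hence it suffices to prove the operator inequality when each $\mathcal{A}^{(j)}_{x_j}=\hat{n}^{(j)}_{x_j}\cdot\vec{\sigma}$ is a traceless qubit involution.

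For the channel I would take each $\Lambda^{(j)}$ to be a local dephasing map whose Kraus operators are built from the message operators $\mathcal{A}^{(j)}_{x_j}$, so that its dual carries $\ket{\xi_s}\!\bra{\xi_s}$ into an operator supported on the same complementary pairs $\{\ket{x'},\ket{\overline{x}'}\}$ as the ``$X$-matrix'' structure of $\tilde{\mathcal{W}}_s$ in \eqref{X-matrix}. With this choice both $\Lambda^\dag[\ket{\xi_s}\!\bra{\xi_s}]$ and $r\,\mathcal{W}_s+\mu\,\mathbbm{1}$ are simultaneously block-diagonal in $2\times2$ blocks indexed by those pairs, so the operator inequality reduces to a finite family of $2\times2$ positive-semidefiniteness (determinant and trace) conditions that can be read off from the eigenvalue structure \eqref{char_eq}–\eqref{eigv}.

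I expect the main obstacle to be showing that a \emph{single} pair $(r,\mu)$ makes every such block positive semi-definite \emph{uniformly over all admissible antipodal configurations}, since at finite $\epsilon$ the $\mathcal{A}^{(j)}_{x_j}$ need not anticommute and the binding constraints come from the blocks associated with the non-maximal eigenvalues of $\mathcal{W}_s$; these determine the optimal slope $r$. Concretely one minimizes $r$ subject to all block determinants being non-negative. For $n=2$ this is a short computation: $\tilde{\mathcal{W}}_s$ has eigenvalues $\{2\sqrt{2},0,0,-2\sqrt{2}\}$, and balancing the positivity constraints across the two blocks gives $r=(4+5\sqrt{2})/16$ and $\mu=-(1+2\sqrt{2})/4$, which one verifies satisfy $2\sqrt{2}\,r+\mu=1$ and $2\sqrt{2}\,r=\tfrac{4+5\sqrt{2}}{4\sqrt{2}}$, yielding $\mathcal{F}(\{\mathcal{M}_0,\mathcal{M}_1\})\geq1-\tfrac{4+5\sqrt{2}}{4\sqrt{2}}\,\epsilon$ as in \eqref{2_input_fid}. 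The general-$n$ statement then follows from the same template, with $r$ and $\mu$ fixed by the extreme blocks of the characteristic structure \eqref{char_eq}.
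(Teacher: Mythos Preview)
Your high-level skeleton is the same as the paper's: establish an operator inequality $\Lambda^\dag[\ket{\xi_s}\!\bra{\xi_s}]\succeq r\,\mathcal{W}_s+\mu\,\mathbbm{1}$ for each $s$, trace against $\mathcal{M}_s$, sum, and use $\sum_s\tr\mathcal{M}_s=2^n$ together with the normalization $r(n-1)2\sqrt{2}+\mu=1$. The antipodal reduction is also handled correctly and in the same spirit as the paper.

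There is, however, a genuine gap in your $n=2$ step. The spectrum $\{2\sqrt{2},0,0,-2\sqrt{2}\}$ you quote is that of the \emph{ideal} $\tilde{\mathcal{W}}_s$ (i.e.\ at $\alpha_1=\alpha_2=\pi/4$); at that point the constraint $r\cdot 2\sqrt{2}+\mu=1$ is saturated on the top eigenvector for \emph{every} $r$, so the ideal blocks cannot pin down $r$ and $\mu$. The specific values $r=(4+5\sqrt{2})/16$, $\mu=-(1+2\sqrt{2})/4$ come from demanding positivity of $\mathcal{K}_s-r\mathcal{W}_s-\mu\mathbbm{1}$ \emph{uniformly over the full one-parameter family} $\mathcal{A}^{(j)}_{x_j}=\cos\alpha_j\,\sigma_\bullet+(-1)^{x_j}\sin\alpha_j\,\sigma_\circ$ with the channel itself depending on $\alpha_j$ through the explicit dephasing weight $g(\alpha)=(1+\sqrt{2})(\sin\alpha+\cos\alpha-1)$; the binding constraint occurs at a non-ideal angle. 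The paper does not redo this computation but imports it from \cite{Kan}; your ``short computation from the ideal eigenvalues'' would not reproduce those numbers.

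A second point you elide is how one gets the inequality for \emph{all} $s$ from a single $s$. The paper's mechanism is concrete: the local unitaries $U_{s\to s'}$ (built from $\sigma_Z$ on the first qubit and $\sigma_X$ on the others) simultaneously send $\ket{\xi_s}\!\to\!\ket{\xi_{s'}}$ and $\mathcal{W}_s\!\to\!\mathcal{W}_{s'}$, and the chosen channel satisfies $\Lambda[U_{s\to s'}\,\cdot\,U_{s\to s'}^\dagger]=U_{s\to s'}\,\Lambda[\cdot]\,U_{s\to s'}^\dagger$ because its Kraus operators $\Gamma^{(j)}$ commute or anticommute with $U^{(j)}_{s\to s'}$. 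Your block-diagonal description presupposes this covariance without stating it. Finally, the paper only invokes the operator inequality for $s=0\cdots0$ via \cite{Baccari_2020} for $n\leq 7$, so your ``general-$n$ follows from the same template'' overstates what is actually established.
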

\vspace{-3pt}
\begin{proof}
Before we begin with the main proof, we emphasize to the readers that to determine the fidelity bound of the measurement, it is sufficient to consider the messages being sent by the senders as antipodal, that is, in accordance with \eqref{ortho_cond}. 

We show this by contradiction. Assume that the minimum fidelity bound that is tight, is due to non-antipodal messages. To be more precise, suppose that for a measurement $\{\mathcal{M}_s\}_s$, and $\{\mathcal{W}_s\}_s$ operators constructed out of non-antipodal messages, $\mathcal{S}=1-\epsilon$ and the corresponding fidelity bound $\mathcal{F}(\{\mathcal{M}_s\}_s)\geq1-f(\epsilon)$ where $\lim_{\epsilon\rightarrow0}f(\epsilon)=0$. From the discussions following the equations \eqref{decompo}-\eqref{tr_Q_f_M}, we have seen that for a given measurement, the best $\mathcal{S}$ is achieved by antipodal messages. Let $\{\overline{\mathcal{W}}_s\}_s$ be the operators constructed out of antipodal messages for this set of measurements $\{\mathcal{M}_s\}_s$. Then we know that

\begin{equation}
\begin{split}
&\sum_s\tr(\mathcal{M}_s\mathcal{W}_s)\leq \sum_s\tr(\mathcal{M}_s\overline{\mathcal{W}}_s) \quad \textrm{which implies that,}\\
&\sum_s\tr(\mathcal{M}_s\mathcal{W}_s)=\delta\sum_s\tr(\mathcal{M}_s\overline{\mathcal{W}}_s) \quad \textrm{for some $\delta \in [0,1]$}.
\end{split}
\end{equation}

To make $\overline{\mathcal{S}}=1-\epsilon$ using $\{\overline{\mathcal{W}}_s\}_s$, we introduce some error in the corresponding measurement such that if we consider $\overline{\mathcal{M}}_s=\delta\mathcal{M}_s+(1-\delta)\mathbbm{1}/2^n$ then
\re
\begin{equation}
    \begin{split}
        \overline{\mathcal{S}}&=\frac{1}{(2\sqrt{2}(n-1))}\sum_s\tr(\overline{\mathcal{M}}_s\overline{\mathcal{W}}_s)\\
        &=\frac{\delta}{(2\sqrt{2}(n-1))}\sum_s\tr(\mathcal{M}_s\overline{\mathcal{W}}_s)\\
        &=1-\epsilon.
    \end{split}
\end{equation}

 In the above equation, one uses the fact that $\tr(\overline{\mathcal{M}}_s\overline{\mathcal{W}}_s)=\delta\tr(\mathcal{M}_s\overline{\mathcal{W}}_s)+((1-\delta)/2^n)\tr(\overline{\mathcal{W}}_s)$ and the fact that $\tr(\overline{\mathcal{W}}_s)=0$. This is so because using \eqref{w_i_def}, one can see that $\overline{\mathcal{W}}_s$ is a sum of terms which are differences of density matrices of pure states, and hence trace of every such term cancels out making $\overline{\mathcal{W}}_s$ traceless. \blk The fidelity bound of the modified measurements become
\begin{equation}
\begin{split}
    \frac{1}{2^n}\sum_s\bra{\xi_s}\Lambda(\overline{\mathcal{M}}_s)\ket{\xi_s}&=\delta\frac{1}{2^n}\sum_s\bra{\xi_s}\Lambda(\mathcal{M}_s)\ket{\xi_s}+\frac{1-\delta}{2^n}\\
    &\geq \delta(1-f(\epsilon))+\frac{1-\delta}{2^n}.
    \end{split}
\end{equation}
The bound achieved above is less than $1-f(\epsilon)$ \re and can be checked by simply subtracting $\delta(1-f(\epsilon))+(1-\delta)/2^n$ from $1-f(\epsilon)$ which results in $(1-\delta)(1-1/2^n-f(\epsilon))$. Now, $f(\epsilon)<1-1/2^n$ for $\epsilon< \textrm{ some }  \epsilon_0$ since $\lim_{\epsilon\rightarrow0}f(\epsilon)=0$.  \blk Thus, we arrive at a contradiction. One cannot arrive at a minimum tight fidelity bound from non-antipodal messages, or in other words, it is sufficient to consider the messages sent by the senders as antipodal to derive a meaningful fidelity bound of measurement.

Now having established that, we shall derive \eqref{n_input_fid} by using operator inequalities like the one introduced in \cite{Kan}
\begin{equation}\label{operator inequality}
    \mathcal{K}_s\geq r\mathcal{W}_s+\mu\mathbbm{1},
\end{equation}
 where $r,\mu\in \mathbb{R}$ and $ \mathcal{K}_s=\Lambda^\dag[\ket{\xi_s}\!\bra{\xi_s}].$ Tracing out both sides of the inequality after multiplying it with the corresponding realized measurement operator $\mathcal{M}_s$, \eqref{operator inequality} takes the form,
 \begin{equation}
     F(\ket{\xi_s}\!\bra{\xi_s},\Lambda[\mathcal{M}_{s}])\geq r\tr(\mathcal{M}_s\mathcal{W}_s) + \mu\tr(\mathcal{M}_s).
 \end{equation}
 Since the messages are pure and antipodal (i.e, $\rho_{0|x_j}^{(j)}$ is orthogonal to  $\rho_{1|x_j}^{(j)}$), we can always parametrize the operators $\mathcal{A}^{(j)}_{x_j}$ as
\begin{equation}
\begin{split}
    &\mathcal{A}_{x_1}^{(1)}=\cos\alpha_1\sigma_X + (-1)^{x_1}\sin\alpha_1\sigma_Z,\\&
    \mathcal{A}_{x_j}^{(j)}=\cos\alpha_j\sigma_A + (-1)^{x_j}\sin\alpha_j\sigma_B,  \quad \forall j=2,\cdots,n
\end{split}    
\end{equation}
where ${\sigma_A} =(\sigma_X + \sigma_Z)/\sqrt{2} $ and $\sigma_B=(\sigma_X - \sigma_Z)/\sqrt{2} $ and $\alpha_j\in[0,\pi/2] \quad \forall j \in \{1,2,\cdots,n\}$. The operators $\{\mathcal{W}_s\}_s$ now depend on the angles $\{\alpha_j\}_j$. Equation \eqref{operator inequality} suggests that every $\mathcal{K}_s$ must also be functions of $\{\alpha_j\}_{j}$ such that $\Lambda=\Lambda(\alpha_1,\cdots,\alpha_n)$. One can construct such a channel if every single-qubit channel $\Lambda^{(j)}$ becomes parametrized by $\alpha_j$; they are defined as channels similar to the one used \re in Eq.(9) \blk of \cite{Kan}
\begin{equation}\label{unital channel}
    \Lambda^{(j)}(x)[\rho]=\frac{1+g(x)}{2}\rho+\frac{1-g(x)}{2}\Gamma^{(j)}(x)\rho\Gamma^{(j)}(x),
\end{equation}
where $\rho$ denotes a single-qubit quantum state. The function $g(x)=(1+\sqrt{2})(\sin x+\cos x-1)$ denotes the dependence of the channel on the angles $\alpha_j$. The operators $\{\Gamma^{(j)}\}_j$ are defined as
\begin{equation} \label{gamma_def}
\begin{split}
    &\Gamma^{(1)}(x)=\sigma_X\quad x\leq\pi/4 \qquad \Gamma^{(1)}(x)=\sigma_Z\quad x>
\pi/4,\\& 
     \Gamma^{(j)}(x)=\sigma_A\quad x\leq\pi/4  \qquad \Gamma^{(j)}(x)=\sigma_B\quad x>
\pi/4,
\end{split}
\end{equation}
for $j=2,\cdots,n$. From \eqref{unital channel} it is evident that $\Lambda$ is self-dual, and hence from now on we shall drop the dagger ($\dag$). Notice that 
it has already been proved in \cite{Baccari_2020} that for all possible values of $\alpha_j$, the inequality \eqref{operator inequality} is valid for some choice of $r,\mu\in \mathbb{R}$ for $s=00\cdots0$ given that $n\leq7$. We shall show that given the fidelity bound of one such $\mathcal{M}_s$, it is possible to infer the fidelity bounds of the rest as well.

Any two maximally entangled GHZ measurements with the same number of outcomes and acting on a finite-dimensional Hilbert space can be connected by a local unitary transformation. This unitary equivalence also extends to the corresponding operator equations, implying a structural correspondence between the measurements. Consider the local unitary, $U_{s\rightarrow s'}=\bigotimes_{j=1}^nU^{(j)}_{s\rightarrow s'}$ which transforms $\ket{\xi_s}$ to $\ket{\xi_{s'}}$. They are defined as,
\begin{equation}
    \begin{split}
        &U_{s\rightarrow s'}^{(1)}=\mathbbm{1} \textrm{ if } s_1=s_1', \qquad U_{s\rightarrow s'}^{(1)}=\sigma_Z \textrm{ if } s_1\neq s_1', \textrm{ and}\\
        &U_{s\rightarrow s'}^{(j)}=\mathbbm{1} \textrm{ if } s_j=s_j', \qquad U_{s\rightarrow s'}^{(j)}=\sigma_X \textrm{ if } s_j\neq s_j',
    \end{split}
\end{equation}
for all $j\in \{2,\cdots,n\}$. The same unitary transforms $\mathcal{W}_s$ to $\mathcal{W}_{s'}$ such that
\begin{equation}
    \begin{split}
        &\ket{\xi_{s'}}=U_{s\rightarrow s'}\ket{\xi_s} \quad \textrm{and,}\\
        &\mathcal{W}_{s'}=U_{s\rightarrow s'}\mathcal{W}_sU^{\dag}_{s\rightarrow s'}.
    \end{split}
\end{equation}
Consider $\mathcal{K}_s-r\mathcal{W}_s$, noting that $\mathcal{K}_s=\Lambda[\ket{\xi_s}\!\bra{\xi_s}]$ which has been shown to be lower-bounded by $\mu\mathbbm{1}$ for $s=00\cdots0$. Now let's consider $\mathcal{K}_{s'}-r\mathcal{W}_{s'}$ where $\mathcal{K}_{s'}=\Lambda[\ket{\xi_{s'}}\!\bra{\xi_{s'}}]$. We can write
\begin{equation} \label{Ks-rWs}
    \begin{split}
        &\Lambda[\ket{\xi_{s'}}\!\bra{\xi_{s'}}]-r\mathcal{W}_{s'}=\\
        &\Lambda[U_{s\rightarrow s'}\ket{\xi_{s}}\!\bra{\xi_{s}}U^{\dag}_{s\rightarrow s'}]-rU_{s\rightarrow s'}\mathcal{W}_{s}U_{s\rightarrow s'}^{\dag}.
    \end{split}
\end{equation}
It can be shown that $\Lambda[U_{s\rightarrow s'}\ket{\xi_{s}}\!\bra{\xi_{s}}U^{\dag}_{s\rightarrow s'}]=U_{s\rightarrow s'}\Lambda[\ket{\xi_{s}}\!\bra{\xi_{s}}]U_{s\rightarrow s'}^{\dag}$, then \eqref{Ks-rWs} becomes
\begin{equation}
    \mathcal{K}_{s'}-r\mathcal{W}_{s'}=U_{s\rightarrow s'}\left(\mathcal{K}_s-r\mathcal{W}_s\right)U_{s\rightarrow s'}^{\dag}\geq \mu \mathbbm{1}.
\end{equation}
The above equation is easily achieved because $\Gamma^{(j)}=\pm U_{s\rightarrow s'}^{(j)}\Gamma^{(j)}\left(U_{s\rightarrow s'}^{(j)}\right)^{\dag}$ $\forall j\in\{1,2,\cdots,n\}$. Thus, one can generate operator inequalities of the form \eqref{operator inequality} for all $s$ from the operator inequality for $s=00\cdots0$. The average fidelity, as defined in \eqref{fidelity}, then results in \eqref{n_input_fid}.
 For $n=2$, it was derived analytically in \cite{Kan} that $r=(4+5\sqrt{2})/16\approx 0.6919$, and $\mu=-(1+2\sqrt{2})/4 \approx -0.9571$, thus establishing \eqref{2_input_fid}.
\end{proof}

The fidelity bound must be greater than $1/2$ for the verifier to infer anything meaningful about the degree of entanglement present in the realized measurement. This puts an upper bound on the error $\epsilon$ such that $\epsilon\leq1/\left(4\sqrt{2}r(n-1)\right)$ and specifically for the two-sender, one-receiver scenario, \re $\epsilon\leq (2\sqrt{2}/(4+5\sqrt{2}))\approx0.25$. For a graphical representation of the fidelity bounds with the deviation $\epsilon$ please refer to FIG. \ref{partial bell basis figuire}.\blk

\section{Robust Self-testing of partial Bell basis measurements}\label{Robust Self-testing of partial Bell basis measurements}

It is well established that, using linear optics, a complete Bell basis measurement cannot be implemented on the polarization degrees of freedom of two distinct photons without the use of ancillary systems \cite{NogoBSM1,NogoBSM2}. However, a partial Bell basis measurement, for example, the three-outcome measurement defined by $\{\ket{\phi^+}\!\bra{\phi^+},\ket{\phi^-}\!\bra{\phi^-}, \ket{\psi^+}\!\bra{\psi^+}+\ket{\psi^-}\!\bra{\psi^-}\} $, where, $\ket{\phi^{\pm}}=\frac{1}{\sqrt{2}}[\ket{00}\pm\ket{11}]$, and  $\ket{\psi^{\pm}}=\frac{1}{\sqrt{2}}[\ket{01}\pm\ket{10}]$, can be realized readily. \re The partial Bell-basis measurement has been effectively realized in several important contexts, including quantum teleportation, dense coding, quantum repeaters, the Hong–Ou–Mandel effect, and many others \cite{HOM,Teleportation_nature,Teleportation_prl,Densecoding_prl,Repeater_prl,Prabhakar_pra,expt-prl}. \blk  Motivated by this practical constraint, we adapt our self-testing protocol to robustly certify this partial Bell basis measurement, making it suitable for implementation in linear optical systems. 

Let us consider a specific communication scenario involving two senders, $A^{(1)}$and $A^{(2)}$ and a single receiver, $R$. As before, each sender $A^{(j)}$, for $j\in\{1,2\}$, receives two inputs: $x_j\in\{0,1\}$ and $a_j\in\{0,1\}$. Based on the inputs they receive, the senders prepare and send messages $m_d(x_j,a_j)$ to the receiver. However, the receiver in this setup accepts an input $k$, where $k\in\{1,2,3\}$. For $k=1$ and $k=2$, the receiver engages in a prepare-and-measure task with sender $A^{(1)}$ and relabels the inputs of $A^{(1)}$ as $x'_1=x_1\oplus a_1$ and $a'_1=a_1$ and aims to guess the $a_1'$ or $x_1'$ bit, respectively, by implementing binary measurements in each case.  For $k=3$, we continue with a similar setup as previously considered in Sec. \ref{comm task} except that the receiver has three distinct outcomes, $\{1,2,3\}$ as shown in FIG. \ref{fig:enter-label1}.

In such a scenario, we define two different success metrics. For $k=1$ and $2$ we have $\mathcal{S}^{RAC}$,
\begin{equation}
    \mathcal{S}^{RAC}=\sum_{a'_1,x'_1}\frac{1}{4}\bigg[p(a'_1|a'_1,x'_1,1)+p(x'_1|a'_1,x'_1,2)\bigg],
\end{equation}
where, $p(z|a_1',x_1',k)$ represents the statistics  involving the inputs of $A^{(1)}$ and $R$ and the outcomes $\{z\}$ that $R$ sees. For $k=3$ we have
\begin{equation}\label{partial bell basis}
     \mathcal{S}^{Comm}=\frac{1}{8\sqrt{2}}\left(W'_1+W'_2+W'_3\right),
 \end{equation}
where
 \begin{widetext}
 \vspace{-0.6cm}
     \begin{equation}
     \begin{split}
    W'_1=\sum_{a_1,a_2=0,1}&(-1)^{a_1\oplus a_2}\Big[p(1|a_1,a_2;x_1=x_2=0)+p(1|a_1,a_2;x_1=1,x_2=0)\\
    &+p(1|a_1,a_2;x_1=0,x_2=1)-p(1|a_1,a_2;x_1=x_2=1)\Big].
      \end{split}
     \end{equation}
     \begin{equation}
     \begin{split}
    W'_2=\sum_{a_1,a_2=0,1}&(-1)^{a_1\oplus a_2}\Big[p(2|a_1,a_2;x_1=x_2=0)+p(2|a_1,a_2;x_1=1,x_2=0)\\
    &-p(2|a_1,a_2;x_1=0,x_2=1)+p(2|a_1,a_2;x_1=x_2=1)\Big].
      \end{split}
     \end{equation}
     \begin{equation}
     \begin{split}
    W'_3=\sum_{a_1,a_2=0,1}(-1)^{a_1\oplus a_2\oplus 1}\Big[p(3|a_1,a_2;x_1=x_2=0)+p(3|a_1,a_2;x_1=1,x_2=0)\Big].
      \end{split}
     \end{equation}
 \end{widetext}
The maximum value of the success metric $\mathcal{S}^{RAC}$ achievable within quantum theory is equal to $(1+1/\sqrt{2})/2\approx0.85$ and for $\mathcal{S}^{Comm}$, its equal to 1.

\begin{figure}[h] 
    \centering
    \includegraphics[width=0.45\textwidth]{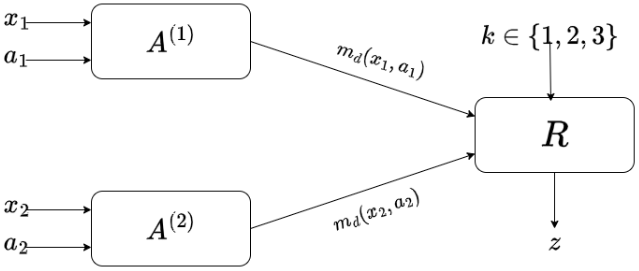}
    \caption{A schematic diagram of a two-sender one-receiver communication scenario in which $A^{(1)}$ and $A^{(2)}$, each transmits a message $m_d(x_j,a_j)$, with $d=2$ in our case, determined by the respective inputs $(x_j,a_j)$ where $j\in\{1,2\}$. These messages are sent to a receiver, who accepts an input $k\in\{1,2,3\}$ and tries to guess the first or the second bit of $A^{(1)}$ for $k=1,2$ respectively, and produces three outputs for $k=3$.}
    \label{fig:enter-label1}
\end{figure}

In a quantum communication scenario the senders $A^{(1)}$ and $A^{(2)}$ encode their input as qubit states, $\rho^{(j)}_{a_j|x_j}, j\in\{1,2\}$ respectively and transmit them to the receiver $R$, who re-labels the states from the first sender $A^{(1)}$ as $\rho^{(j)}_{a'_1|x'_1}$ for input $k=1,2$ and performs a $2\rightarrow 1$ random access code using binary-outcome measurements $\mathcal{M}^{(k)}=\mathcal{M}^{(k)}_1-\mathcal{M}^{(k)}_2$. For $k=3$, the receiver performs a measurement with a valid POVM, $\{\mathcal{M}^{(3)}_i\}_{i=1}^3$ to calculate the value of the success metric $\mathcal{S}^{Comm}$, where $W'_i=\tr({\mathcal{M}^{(3)}_i\mathcal{W'}_i})$ and $\{\mathcal{W'}_i\}_i$ are defined as
 \begin{equation} \label{partial bell operator eq}
        \begin{split}
            \mathcal{W'}_{1} &=(\mathcal{A}_{0}^{(1)}+\mathcal{A}_{1}^{(1)})\otimes \mathcal{A}_{0}^{(2)}+(\mathcal{A}_{0}^{(1)}-\mathcal{A}_{1}^{(1)})\otimes \mathcal{A}_{1}^{(2)},\\
            \mathcal{W'}_{2}&=-(\mathcal{A}_{0}^{(1)}+\mathcal{A}_{1}^{(1)})\otimes \mathcal{A}_{0}^{(2)}+(\mathcal{A}_{0}^{(1)}-\mathcal{A}_{1}^{(1)})\otimes \mathcal{A}_{1}^{(2)},\\
            \mathcal{W'}_{3}&=-2(\mathcal{A}_{0}^{(1)}-\mathcal{A}_{1}^{(1)})\otimes \mathcal{A}_{1}^{(2)},
        \end{split}
        \end{equation}
        where the operators $\mathcal{A}^{(j)}_{x_j}=\rho^{(j)}_{0|x_j}-\rho^{(j)}_{1|x_j}$.
 Now, we continue to prove that the value of $\mathcal{S}^{Comm}$ attains its maximum value when measurements are performed in partial Bell basis measurements. 
\begin{thm}\label{th4}
The optimum quantum value of $\mathcal{S}^{RAC}$ and $\mathcal{S}^{Comm}$ in \eqref{partial bell basis} are $(1+1/\sqrt{2})/2$ and $1$ respectively. If these values are achieved from an unknown set of qubit states $\{\rho^{(j)}_{a_j|x_j}\}_{a_j,x_j}$ and an unknown measurement $\{\mathcal{M}^{(3)}_i\}_i$ for $k=3$ on the tqubit system, then there exists a set of qubit unitaries $\{U_j\}_j$ such that the  states are equivalent to the ideal reference implementation as defined in Eqs. \eqref{first set} and \eqref{second set} in Theorem \ref{th1} and the measurement operators $\{\mathcal{M}^{(3)}_i\}_i$ are
    \begin{equation}
         \begin{split}
             &(U_1\otimes U_2)\mathcal{M}^{(3)}_1(U_1^\dag\otimes U_2^\dag)= \ket{\phi^+}\!\bra{\phi^+}, \\&(U_1\otimes U_2)\mathcal{M}^{(3)}_2(U_1^\dag\otimes U_2^\dag)=\ket{\phi^-}\!\bra{\phi^-},\\&(U_1\otimes U_2)\mathcal{M}^{(3)}_3(U_1^\dag\otimes U_2^\dag)=\ket{\psi^+}\!\bra{\psi^+}+\ket{\psi^-}\!\bra{\psi^-}.
         \end{split}    
    \end{equation}
The measurement for $k=3$ on the receiver's end can be shown to be robust to noise in case of non-optimal values of $\mathcal{S}^{Comm}$ when lower bounded by $1-\epsilon$ (i.e. $\mathcal{S}^{Comm}\geq1-\epsilon)$ and $\mathcal{S}^{RAC}$ using a local map $\Lambda=\Lambda^{(1)}\otimes\Lambda^{(2)}$, such that both fidelities $F(|\phi^+\rangle\!\langle \phi^+|,\Lambda[\mathcal{M}^{(3)}_1]), F(|\phi^-\rangle\!\langle \phi^-|,\Lambda[\mathcal{M}^{(3)}_2]) $, defined in \eqref{fidelity of f}, are lower bounded by 
\begin{equation}\label{fidelity partial}
\begin{split}
     &  F(|\phi^+ \rangle\!\langle \phi^+| ,\Lambda[\mathcal{M}^{(3)}_1]) \ \left(\text{or } F(|\phi^-\rangle\!\langle \phi^-|,\Lambda[\mathcal{M}^{(3)}_2])\right) \\
     \geq & \ 1-\frac{8\sqrt{2}\epsilon}{3}\left(r-\frac{\mu}{\sqrt{2}}\right)
        \\&+\mu \cos^{-1}\left(2\sqrt{2}\left(\mathcal{S}^{RAC}-
        \frac{1}{2}\right)\right)\left(2-\frac{4\epsilon}{3}\right),
\end{split}
\end{equation}
where $r={(4+5\sqrt{2})}/{16}$ and $\mu=-{(1+2\sqrt{2})}/{4}$. 
\end{thm}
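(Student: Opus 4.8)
The plan is to prove Theorem~\ref{th4} in three stages, leveraging the machinery already developed for Theorem~\ref{th1}. First I would establish the \emph{ideal} self-testing claim. The three operators $\{\mathcal{W'}_i\}_i$ in \eqref{partial bell operator eq} are built from exactly the same building blocks $\mathcal{A}^{(1)}_{x_1},\mathcal{A}^{(2)}_{x_2}$ as the $n=2$ case of Theorem~\ref{th1}; indeed $\mathcal{W'}_1$ coincides with the $s=00$ operator $\mathcal{W}_{00}$ and $\mathcal{W'}_2$ with the $s=01$ operator. So I would reuse the SOS argument: attaining $\mathcal{S}^{Comm}=1$ forces $\tr(\mathcal{M}^3_i\mathcal{W'}_i)$ to hit the maximal eigenvalue of each $\mathcal{W'}_i$, which by the Theorem~\ref{th1} analysis forces the messages to be pure, antipodal, and (after local unitaries $U_1,U_2$) of the canonical form \eqref{first set}--\eqref{second set}, with $\mathcal{A}^{(1)}_{x_1}\mapsto(\sigma_X+(-1)^{x_1}\sigma_Z)/\sqrt2$ and $\mathcal{A}^{(2)}_0\mapsto\sigma_X,\ \mathcal{A}^{(2)}_1\mapsto\sigma_Z$. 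Substituting these into \eqref{partial bell operator eq} diagonalizes $\mathcal{W'}_1$ and $\mathcal{W'}_2$ in the Bell basis with $\ket{\phi^+}$ and $\ket{\phi^-}$ as their respective unique maximal eigenvectors, pinning $\mathcal{M}^3_1=\ket{\phi^+}\!\bra{\phi^+}$ and $\mathcal{M}^3_2=\ket{\phi^-}\!\bra{\phi^-}$. The third element is then forced by the POVM completeness relation $\sum_i\mathcal{M}^3_i=\mathbbm{1}$, giving $\mathcal{M}^3_3=\ket{\psi^+}\!\bra{\psi^+}+\ket{\psi^-}\!\bra{\psi^-}$; I should separately check $\mathcal{W'}_3$ is consistent with, and does not overconstrain, this solution.

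Second, I would handle the \textbf{RAC} component, whose role is to rigidify the \emph{states} sent by $A^{(1)}$ independently of the receiver's $k=3$ measurement. The value $\mathcal{S}^{RAC}=(1+1/\sqrt2)/2$ is the well-known optimal quantum bound for the $2\to1$ random-access code, whose tight saturation self-tests the four states $\rho^{(1)}_{a_1|x_1}$ as the BB84 (relabelled) states and fixes the relative angle between the two preparation bases to $\pi/2$. The key point is that this gives quantitative control on the non-ideality of $A^{(1)}$'s Bloch vectors in terms of the deficit of $\mathcal{S}^{RAC}$ from its optimum — which is precisely what the $\cos^{-1}\!\big(2\sqrt2(\mathcal{S}^{RAC}-\tfrac12)\big)$ factor in \eqref{fidelity partial} encodes. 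I would extract from the RAC saturation an explicit parametrization of the realized $\mathcal{A}^{(1)}_{x_1}$ (angle $\alpha_1$) in terms of $\mathcal{S}^{RAC}$.

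Third, the \textbf{robustness} bound. Here I would invoke the operator-inequality method of Theorem~\ref{robust1}: for the parametrized channel $\Lambda^{(j)}(x)$ of \eqref{unital channel}--\eqref{gamma_def} and the same constants $r=(4+5\sqrt2)/16$, $\mu=-(1+2\sqrt2)/4$, one has $\mathcal{K}_i\geq r\mathcal{W'}_i+\mu\mathbbm{1}$ for $i=1,2$, and tracing against $\mathcal{M}^3_i$ yields $F_i\geq r\tr(\mathcal{M}^3_i\mathcal{W'}_i)+\mu\tr(\mathcal{M}^3_i)$. I would then split the bound on $\sum_i\tr(\mathcal{M}^3_i\mathcal{W'}_i)$ into the part controlled by $\mathcal{S}^{Comm}\geq1-\epsilon$ and the part where the imperfect RAC states introduce a deviation measured by the angle $\alpha_1$, so that $\mu\tr(\mathcal{M}^3_i)$ picks up the $\cos^{-1}(\cdots)$ term from Step~2; the $(2-4\epsilon/3)$ and $\tfrac{8\sqrt2\epsilon}{3}$ prefactors should emerge from distributing the $\epsilon$-deficit of $\mathcal{S}^{Comm}=(W'_1+W'_2+W'_3)/(8\sqrt2)$ across the three $W'_i$ contributions (the factor $3$ reflecting the three-outcome normalization). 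The main obstacle I anticipate is \emph{decoupling the two figures of merit}: unlike Theorem~\ref{robust1}, the $k=3$ measurement is tied to states whose fidelity is separately certified only through $\mathcal{S}^{RAC}$, so I must verify that the same self-dual channel $\Lambda$ simultaneously validates the operator inequality for $\mathcal{W'}_1,\mathcal{W'}_2$ and correctly transports the RAC-induced angular error into the final bound, without the antipodality reduction (the contradiction argument opening the proof of Theorem~\ref{robust1}) silently assuming away the very RAC imperfection that \eqref{fidelity partial} is meant to track.
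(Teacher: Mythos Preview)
Your plan has a genuine structural gap in Stage~1: you try to derive the ideal self-testing from $\mathcal{S}^{Comm}=1$ \emph{alone}, relegating $\mathcal{S}^{RAC}$ to a later robustness step. This ordering cannot work. The Theorem~\ref{th1} machinery you want to reuse hinges on the trace-sum argument: $\tr(\mathcal{M}^3_i)\geq 1$ for $i=1,2$ and $\tr(\mathcal{M}^3_3)\geq 2$, summing to the ambient dimension $4$, which forces equality. The bound $\tr(\mathcal{M}^3_3)\geq 2$ requires $\|\mathcal{W}'_3\|\leq 2\sqrt{2}$, but from \eqref{partial bell operator eq} one only has $\|\mathcal{W}'_3\|=2\|\mathcal{A}_0^{(1)}-\mathcal{A}_1^{(1)}\|\cdot\|\mathcal{A}_1^{(2)}\|\leq 4$ in general. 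The paper closes this by invoking $\mathcal{S}^{RAC}=(1+1/\sqrt{2})/2$ \emph{first}, which pins $A^{(1)}$'s preparations to the reference form and gives $\|\mathcal{A}_0^{(1)}-\mathcal{A}_1^{(1)}\|=\sqrt{2}$, hence $\|\mathcal{W}'_3\|\leq 2\sqrt{2}$. Without that step your claim that ``$\mathcal{S}^{Comm}=1$ forces $\tr(\mathcal{M}^3_i\mathcal{W}'_i)$ to hit the maximal eigenvalue'' is simply false: take $\mathcal{A}_0^{(1)}=-\mathcal{A}_1^{(1)}=\sigma_Z$, $\mathcal{A}_1^{(2)}=\sigma_Z$, and the \emph{separable} measurement $\mathcal{M}^3_1=\ket{00}\!\bra{00}$, $\mathcal{M}^3_2=\ket{11}\!\bra{11}$, $\mathcal{M}^3_3=\ket{01}\!\bra{01}+\ket{10}\!\bra{10}$. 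Then $W'_1=W'_2=2$, $W'_3=8$, so $\mathcal{S}^{Comm}=12/(8\sqrt{2})=3/(2\sqrt{2})>1$. Thus $\mathcal{S}^{Comm}=1$ by itself self-tests nothing; the RAC input is not auxiliary but load-bearing already at the ideal level.

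For the robustness stage your outline is closer to the paper's, but the mechanism by which the $\cos^{-1}(\cdots)$ term enters needs to be made explicit. It is \emph{not} that the RAC deficit perturbs $\tr(\mathcal{M}^3_i\mathcal{W}'_i)$; rather, the RAC bound on $\alpha_1$ controls $\|\mathcal{W}'_3\|=4\sin\alpha_1$, which via $\tr(\mathcal{M}^3_3\mathcal{W}'_3)\geq 4\sqrt{2}-\epsilon'$ gives a \emph{lower} bound on $\tr(\mathcal{M}^3_3)$, and then the POVM constraint $\sum_i\tr(\mathcal{M}^3_i)=4$ converts this into an \emph{upper} bound on $\tr(\mathcal{M}^3_i)$ for $i=1,2$. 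Because $\mu<0$, this upper bound on the trace is what feeds into $F_i\geq r\tr(\mathcal{M}^3_i\mathcal{W}'_i)+\mu\tr(\mathcal{M}^3_i)$ to produce the $\cos^{-1}$ contribution in \eqref{fidelity partial}. Your worry about ``decoupling the two figures of merit'' is therefore resolved not by the channel $\Lambda$ but by this trace-accounting chain, which you should spell out.
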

\begin{proof}
\textit{For ideal self-testing:} Let us write the qubit preparations for both senders as $\rho^{(j)}_{a_j|x_j}=(\mathbbm{1}+\vec{m}^{(j)}_{a_j|x_j}\cdot\vec{\sigma})/2$  where $\vec{m}^{(j)}_{a_j|x_j}$ denotes the Bloch vector ($|\vec{m}^{(j)}_{a_j|x_j}|=1$, if pure and $|\vec{m}^{(j)}_{a_j|x_j}|<1$, if mixed) and $\vec{\sigma}=(\sigma_X,\sigma_Y,\sigma_Z)$ denotes the vector of Pauli matrices. For $k=1,2$, let us choose qubit states sent by $A^{(1)}$ to the receiver after relabeling as $\rho^{(1)}_{a'_1|x'_1}=(\mathbb{1}+\vec{m}^{(1)}_{a'_1|x'_1}\cdot\vec{\sigma})/2$. Then, the value of $\mathcal{S}^{RAC}$ can be upper bounded as
\begin{equation}\label{srac}
    \mathcal{S}^{RAC}\leq \frac{1}{2}+\frac{1}{8\sqrt{2}}[\sqrt{\gamma+\beta}+\sqrt{\gamma-\beta}],
\end{equation}
where
\begin{equation}
\begin{split}
    &\gamma=\frac{1}{2}\sum_{a'_j,x'_j}|\vec{m}^{(1)}_{a'_j|x'_j}|^2-\vec{m}^{(1)}_{0|0}\cdot\vec{m}^{(1)}_{1|1}-\vec{m}^{(1)}_{0|1}\cdot\vec{m}^{(1)}_{1|0} \quad \textrm{and,}
    \\
    &\beta=(\vec{m}^{(1)}_{0|0}-\vec{m}^{(1)}_{1|1})\cdot(\vec{m}^{(1)}_{0|1}-\vec{m}^{(1)}_{1|0}).
\end{split} 
\end{equation}
When the receiver observes the maximum value of $\mathcal{S}^{RAC}$, the set of four prepared states  on the $A^{(1)}$'s end must be equivalent to the four ideal states given in Theorem \ref{th1} and the receiver carries out binary measurements $\mathcal{M}^{(k)}$, where $\mathcal{M}^{(1)}=\sigma_X$ and $\mathcal{M}^{(2)}=\sigma_Z$ \cite{PhysRevA.98.062307}.

The two operators $\mathcal{W'}_1$ and $\mathcal{W'}_2$ are structurally analogous to the two operators $\mathcal{W}_1$ and $\mathcal{W}_2$ mentioned in the communication scenario involving two senders and a receiver in \eqref{chsh OPERATORS}.  By the same argument shown in Theorem \ref{th1} we can say that $\|\mathcal{W'}_1\|$ and $\|\mathcal{W'}_2\|$ are bounded by $2\sqrt{2}$. Now for $\mathcal{W'}_3$, since ideal self-testing of the messages has been established on the $A^{(1)}$'s side, and noting that the observable $\mathcal{A}^{(2)}_{1}$ can be taken as $\vec{a}\cdot\vec{\sigma}$ where $\vec{a}=(\vec{m}^{(2)}_{0|1}-\vec{m}^{(2)}_{1|1})/2$ and $|\vec{a}|\leq1$, we can say that $\|\mathcal{W'}_3\|$ can be bounded by $2\sqrt{2}$. 

Now consider the POVM $\{\mathcal{M}^{(3)}_i\}_i$. When self-tested, the expectation values of the operator equations \eqref{partial bell operator eq} when measured with respect to such a measurement operator $\{\mathcal{M}^{(3)}_i\}_i$ must be equal to $2\sqrt{2}$ for $i=1,2$ and $4\sqrt{2}$ for $i=3$, i.e, 
\begin{equation}\label{trace equations partial}
\begin{split}
   &\tr({\mathcal{M}^{(3)}_1\mathcal{W'}_1})=2\sqrt{2},\\
   & \tr({\mathcal{M}^{(3)}_2\mathcal{W'}_2})=2\sqrt{2},\\
   & \tr({\mathcal{M}^{(3)}_3\mathcal{W'}_3})=4\sqrt{2}.
\end{split}
\end{equation}
Following the same argument as given in Theorem \ref{th1}, we can arrive at the same conclusion \eqref{tr_mi_geq1} for $\mathcal{M}^{(3)}_1$ and $\mathcal{M}^{(3)}_2$ i.e. $\tr{\mathcal{M}^{(3)}_i}\geq1$ for $i=1,2$. Using the fact that $\|\mathcal{W'}_3\|\leq2\sqrt{2}$ and \eqref{trace equations partial} we can say that $\tr{\mathcal{M}^{(3)}_3}\geq2$. Since $\{\mathcal{M}^{(3)}_i\}_i$ forms a POVM and $\tr{(\sum_{i=1}^{3}\mathcal{M}^{(3)}_{i})=4}$, it follows that $\tr{\mathcal{M}^{(3)}_i}=1$ for $i=1,2$ and $\tr{\mathcal{M}^{(3)}_3}=2$. By analogous reasoning, the $\mathcal{W}'_1$, and $\mathcal{W}'_2$ operators must be expressed as the same linear combination of Pauli matrices as stated in Theorem \ref{th1} and thus, we can say that the measurement operators $\mathcal{M}^{(3)}_1$ and $\mathcal{M}^{(3)}_2$ are local unitary equivalent to $\ket{\phi^+}\!\bra{\phi^+}$ and $\ket{\phi^-}\!\bra{\phi^-}$ respectively. This makes it clear that $\mathcal{M}^{(3)}_3$ must be local unitary equivalent to $\ket{\psi^+}\!\bra{\psi^+}+\ket{\psi^-}\!\bra{\psi^-}$. 
    
\textit{For robust self-testing:} the maximal quantum value of this communication task self-tests a partial Bell basis measurement for $k=3$, with two entangled elements and one product element. Thus, we focus on establishing the robustness of the entangled measurements operators’ fidelity rather than their average fidelity. Before delving into the robustness of $\mathcal{M}^{(3)}_1$ and $\mathcal{M}^{(3)}_2$, we show that $\mathcal{S}^{Comm}$ can be maximized only when the messages sent by each sender $j$, for each input $x_j$, correspond to antipodal Bloch vectors. Analogous to the discussion of the robustness of the GHZ measurements discussed in Sec. \ref{robustness}, we see that the Eq. \eqref{partial bell basis} can be reformulated as follows
\begin{equation} \label{decompo1}
    \begin{split}
    \mathcal{S}^{Comm}=\frac{1}{8\sqrt{2}}\tr& \Big[\Big(\mathcal{Q}_{00}+\mathcal{Q}_{10}\Big)f'_1(\mathcal{M})
    +\\&\Big(\mathcal{Q}_{01}-\mathcal{Q}_{11}\Big)f'_2(\mathcal{M})\Big],
    \end{split}
    \end{equation}
    where, $\mathcal{Q}_{x_1x_2}$ has the same expression as \eqref{qx1x2} and
    \begin{equation}
    \begin{split}
        &f'_1(\mathcal{M})=\mathcal{M}^{(3)}_1-\mathcal{M}^{(3)}_2,\\
        &f'_2(\mathcal{M})=\mathcal{M}^{(3)}_1+\mathcal{M}^{(3)}_2-2\mathcal{M}^{(3)}_3.
    \end{split}    
    \end{equation}
Following the same line of reasoning \eqref{tr_Q_f_M}-\eqref{ortho_cond},  we arrive at the same conclusion that the maximum quantum value of $\mathcal{S}^{Comm}$ can be attained only when, for each $x_j$, the messages sent by each sender $j$ must correspond to anti-podal Bloch vectors. As demonstrated in the proof of Theorem \ref{robust1}, we again emphasize that it is sufficient to consider the states being sent by the senders as antipodal. 

This allows us to parametrize the operators $\mathcal{A}_{x_j}^{(j)}$ as
\begin{equation}\label{param sin cos}
    \begin{split}
    &\mathcal{A}_{x_1}^{(1)}=\cos\alpha_1\sigma_X + (-1)^{x_1}\sin\alpha_1\sigma_Z,\\&
    \mathcal{A}_{x_2}^{(2)}=\cos\alpha_2\sigma_A + (-1)^{x_2}\sin\alpha_2\sigma_B,  
\end{split}    
\end{equation}
where ${\sigma_A} =(\sigma_X + \sigma_Z)/\sqrt{2} $,$\sigma_B=(\sigma_X - \sigma_Z)/\sqrt{2} $ and $\alpha_1,\alpha_2\in[0,\pi/2]$. The success metric $\mathcal{S}^{Comm}$ is a sum of three terms. For simplicity, we choose to lower bound each term by subtracting the maximum value that can be achieved by each term by the same value $\epsilon'$, namely the minimum of the three individual errors, and use this as a uniform error across all terms, i.e, 
\begin{equation}
    \begin{split}
       & \tr{(\mathcal{M}^{(3)}_1\mathcal{W'}_1)}\geq2\sqrt{2}-\epsilon',\\
       & \tr{(\mathcal{M}^{(3)}_2\mathcal{W'}_2)}\geq2\sqrt{2}-\epsilon',\\
       & \tr{(\mathcal{M}^{(3)}_3\mathcal{W'}_3)}\geq4\sqrt{2}-\epsilon'.
    \end{split}
\end{equation}
According to the Theorem \ref{robust1}, if the success metric $\mathcal{S}^{Comm}$is lower bounded by $1-\epsilon$, then it follows that $\epsilon'=(8\sqrt{2}\epsilon)/3$. The maximum eigenvalues of $\{\mathcal{W'}_i\}_i$'s, $\lVert(\mathcal{W'}_i)\rVert$ can be computed using \eqref{param sin cos} such that 
\begin{equation}
\begin{split}
    &\lVert\mathcal{W'}_i\rVert=2\sqrt{1+\sin2\alpha_1\sin2\alpha_2}\leq2\sqrt{2}, \quad i\in\{1,2\}\\
&\lVert\mathcal{W'}_3\rVert=4\sin\alpha_1\leq 4.
\end{split}    
\end{equation}
When ideally self-tested, $\lVert\mathcal{W'}_i\rVert=2\sqrt{2}$ for all $i\in\{1,2,3\}$ which implies that $\alpha_1=\alpha_2=\pi/4$. Using the trace condition $\tr(\mathcal{M}^{(3)}_i\mathcal{W'}_i)\leq\lVert\mathcal{W'}_i\rVert\tr({\mathcal{M}^{(3)}_i})$, we can get a lower bound of $\tr({\mathcal{M}^{(3)}_i})$ as  
\begin{equation}
    \begin{split}
        & \tr{(\mathcal{M}^{(3)}_1)}\geq1-\frac{\epsilon'}{2\sqrt{2}},\\
       & \tr{(\mathcal{M}^{(3)}_2)}\geq1-\frac{\epsilon'}{2\sqrt{2}},\\
       & \tr{(\mathcal{M}^{(3)}_3)}\geq\frac{4\sqrt{2}-\epsilon'}{4\sin\alpha_1}.
    \end{split}
\end{equation}
Since $\{\mathcal{M}^{(3)}_i\}_i$ constitutes a valid POVM on a four-dimensional Hilbert space, the sum of their traces must satisfy $\tr{(\sum_i\mathcal{M}^{(3)}_i)=4}$. Therefore, knowing the minimum values of $\tr{(\mathcal{M}^{(3)}_2)}$, and $\tr{(\mathcal{M}^{(3)}_3)}$ and Taylor expanding $\csc \alpha_1$ around $\pi/4$ upto the linear order allows us to determine the maximum value that $\tr{(\mathcal{M}^{(3)}_1)}$ can achieve. A similar procedure can be followed for $\tr{(\mathcal{M}^{(3)}_2)}$ such that
\begin{equation}\label{trace of measurement equation}
    \tr{(\mathcal{M}^{(3)}_i)}\leq 1+\frac{\epsilon'}{\sqrt{2}}+\bigg(2-\frac{\epsilon'}{2\sqrt{2}}\bigg)\Delta\alpha_1 \quad i\in\{1,2\},
\end{equation}
where $\Delta \alpha_1=\alpha_1-\pi/4$. If the allowed truncation error while expanding $\csc\alpha_1$ is $e$, then for \eqref{trace of measurement equation} to faithfully hold, $|\Delta \alpha_1|\leq\sqrt{4e/3\sqrt{2}}$. However, the fidelity bound must involve quantities that are determined by the success metrics. Thus, we must bound $\Delta \alpha_1$ with some function of $\mathcal{S}^{RAC}$, and it is easily achievable since by using \eqref{param sin cos} we can rewrite \eqref{srac} as
\begin{equation}
     \mathcal{S}^{RAC}\leq \frac{1}{2}+\frac{1}{8\sqrt{2}}[\sqrt{4+4\cos 2\alpha_1}+\sqrt{4-4\cos 2\alpha_1}],
\end{equation}
which leads to $\Delta\alpha_1$ being bounded as
\begin{equation}
    \Delta\alpha_1\leq \cos^{-1}\bigg(2\sqrt{2}\bigg(\mathcal{S}^{RAC}-\frac{1}{2}\bigg)\bigg).
\end{equation}
One can always choose the truncation error such that
\begin{equation}
    \sqrt{\frac{4e}{3\sqrt{2}}}\approx \cos^{-1}\bigg(2\sqrt{2}\bigg(\mathcal{S}^{RAC}-\frac{1}{2}\bigg)\bigg).
\end{equation}
We can then reformulate Eq. \eqref{trace of measurement equation} such that we arrive at
\begin{equation}\label{trace inequality1}
\begin{split}
    \tr{(\mathcal{M}^{(3)}_i)}&\leq 1+\frac{\epsilon'}{\sqrt{2}}\\
    &+\bigg(2-\frac{\epsilon'}{2\sqrt{2}}\bigg) \cos^{-1}\bigg(2\sqrt{2}\bigg(\mathcal{S}^{RAC}-\frac{1}
    {2}\bigg)\bigg).
\end{split}
\end{equation}

We have already discussed the operator inequalities used to quantify the fidelity between ideal and realized measurements in the proof of Theorem \ref{robust1}. Consequently, we have the operator inequalities
\begin{subequations}
    \begin{align}
        &\Lambda(\ket{\phi^+}\!\bra{\phi^+})\geq r\mathcal{W'}_1+\mu\mathbb{1}\label{3 outcome proj},\\&
        \Lambda(\ket{\phi^-}\!\bra{\phi^-})\geq r\mathcal{W'}_2+\mu\mathbb{1}\label{3 outcome proj1},
    \end{align}
\end{subequations}
with $r$ and $\mu$ equal to those chosen in Theorem \ref{robust1} for the two-sender, one-receiver case.
Multiplying \eqref{3 outcome proj} and \eqref{3 outcome proj1} with $\mathcal{M}^{(3)}_1$ and $\mathcal{M}^{(3)}_2$ and tracing out, we come up with the following inequalities for $i\in\{1,2\}$
\begin{equation}\label{3 outcome proj2}
    {F}(\mathcal{M}^{(3)}_i)\geq r\tr{(\mathcal{M}^{(3)}_i\mathcal{W'}_i)}+\mu\tr({\mathcal{M}^{(3)}_i}).
\end{equation}
Since the value of $\mu$ is negative, we can place \eqref{trace inequality1} in \eqref{3 outcome proj2}. Finally, making use of the fact that $2\sqrt{2}r+\mu=1$, we arrive at \eqref{fidelity partial}.

This completes the proof of the ideal and robust self-testing of the partial Bell basis measurement.
\leavevmode
\end{proof}

As stated in the discussion following the proof of Theorem \ref{robust1}, the fidelity bound must be greater than $1/2$ for a verifier to draw an inference about the degree of entanglement in the realized measurement operator. 
However, a word of caution is in order. \re A bit of careful observation reveals that the fidelity bound \eqref{fidelity partial} approaches $1$ when both $\mathcal{S}^{Comm}\rightarrow1$ and $\mathcal{S}^{RAC}\rightarrow (1+1/\sqrt{2})/2$. In other words, if one has $\mathcal{S}^{RAC}<(1+1/\sqrt{2})/2$ then the fidelity bound is \textit{not} tight. Thus, for cases where we have a non-optimal value of $\mathcal{S}^{RAC}$, requiring the fidelity bound in \eqref{fidelity partial} to be greater than or equal to $1/2$ implies that we need
\begin{equation} \label{eps_bound_partial}
    \begin{split}
        &\epsilon\leq f_1(\mathcal{S}^{RAC})/f_2(\mathcal{S}^{RAC}) \textrm{ where,}\\
        &f_1(\mathcal{S}^{RAC})=1/2+2\mu\cos^{-1}\left(2\sqrt{2}\left(\mathcal{S}^{RAC}-1/2\right)\right) \textrm{ and,}\\
        &f_2(\mathcal{S}^{RAC})=(8\sqrt{2}/3)(r-\mu/\sqrt{2})+\\
        &(4\mu/3)\cos^{-1}\left(2\sqrt{2}\left(\mathcal{S}^{RAC}-1/2\right)\right).
    \end{split}
\end{equation}
The upper bound on the error is meaningful only when it is strictly greater than $0$, which necessitates that both $f_1(\mathcal{S}^{RAC})>0$ and $f_2(\mathcal{S}^{RAC})>0$. This implies that one must always have
\begin{equation}
    \begin{split}
        \mathcal{S}^{RAC}>&\frac{1}{2}+\frac{1}{2\sqrt{2}}\cos\left(\frac{1}{4|\mu|}\right)\\
        =&\frac{1}{2}+\frac{1}{2\sqrt{2}}\cos\left(\frac{1}{1+2\sqrt{2}}\right)\\
        \approx & 0.842.
    \end{split}
\end{equation}
 When $\mathcal{S}^{RAC}=(1+1/\sqrt{2})/2$, \eqref{eps_bound_partial} results in
\begin{equation} \label{eps_upper}
    \epsilon\leq \frac{3}{14+12\sqrt{2}}\approx0.09.
\end{equation} \blk          
\begin{figure}[h] 
    \centering
    \includegraphics[width=9cm, height=5cm]{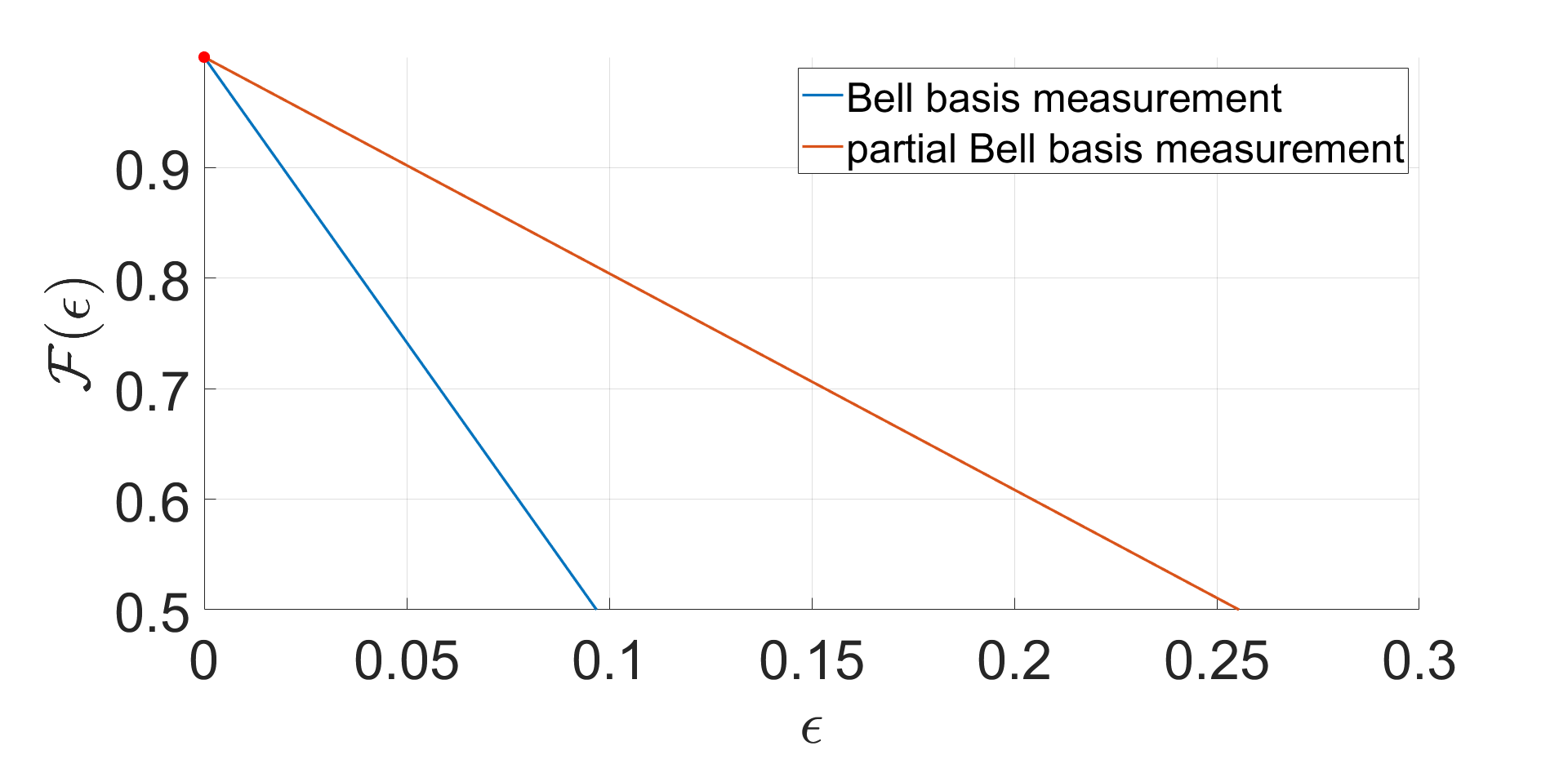}
    \caption{\re Plot illustrating the relationship between the deviation $\epsilon $ and achievable fidelity $\mathcal{F}(\epsilon)$ as a function of the deviation $\epsilon$ for two different scenarios. The red line indicates the robust self-testing of Bell basis measurement in a two sender-one receiver scenario where $\mathcal{S}\geq 1-\epsilon$. The blue line indicates the scenario when partial Bell basis measurement is robustly self-tested and $\mathcal{S}^{Comm}\geq 1-\epsilon$. Also, note that in this scenario, the graph shows the variation when $\mathcal{S}^{RAC}$ achieves its maximum value.\blk }
    \label{partial bell basis figuire}
\end{figure}

\section{Conclusion}\label{conclusion}
In this paper, we demonstrate a semi-device-independent self-testing protocol for measurements in an entangled basis, without requiring shared entanglement between parties. The approach is based on a communication task involving $n$ senders and a receiver, with success quantified by a specific metric. Theorem \ref{th1} showed that achieving the maximum quantum value of this metric requires the senders to transmit pure states and the receiver to perform entangled basis measurements. However, we explicitly showed an example where achieving this maximum alone does not always guarantee self-testability of the measurements. Using operator inequalities, the robustness of the measurement process under noise was studied. We also discussed the relevance of a communication scenario using partial Bell basis measurement, which can be realized in a laboratory using linear optics.

In future work it would be compelling to explore the self-testing of other classes of entangled measurements, particularly in higher-dimensional systems. In this work we chose to delve into a simpler scenario in which each sender receives four possible inputs. \re One may also consider the self-testing of other Bell inequalities, such as the Mermin inequality \cite{Kan}, to develop robust self-testing protocols for entangled basis measurements. \blk However, it would also be interesting to explore the case where each sender receives only three inputs. While recent advances have demonstrated the self-testing of nonprojective qubit measurements \cite{Tavakoli_2020, PhysRevA.110.032427} and unsharp measurements \cite{PhysRevResearch.2.033014} within prepare-and-measure scenarios, such investigations have yet to be extended to communication-based frameworks. Notably, prepare-and-measure approaches typically rely on assumptions about the underlying Hilbert space dimension. However, alternative physical constraints, such as bounds on the energy, entropy, or purity of the prepared states, may also serve as viable resources for enabling semi-device-independent self-testing.

\subsection*{Acknowledgment} We acknowledge financial support from STARS (STARS/STARS-2/2023-0809), Govt. of India.

\bibliography{ref}
\vspace{0.5 cm}
\appendix
\makeatletter
\renewcommand\section{\@startsection{section}{1}{0pt}
  {1.5ex plus 1ex minus .2ex}
  {1ex plus .2ex}
  {\centering\normalfont\bfseries}}
\makeatother
\renewcommand{\thesection}{\Alph{section}}
\renewcommand\appendixname{Appendix}
\onecolumngrid
\renewcommand\appendixname{Appendix}
\renewcommand{\thesection}{\Alph{section}}
\re \section{Explicit matrix representation of states and measurements to demonstrate that optimal quantum advantage does not always imply self-testing}

Consider the two-sender one-receiver communication task and the success metric discussed in Sec. \ref{no_advantage}. The maximum value that the success metric $\mathcal{S}$ can achieve, as defined in \eqref{S_no_self_test} is $\mathcal{S}\approx2.8284$. In Sec. \ref{no_advantage}, the entangling and the non-entangling measurements that result in optimal advantage and the corresponding message states were laid out. Here, we provide explicit matrix representations of those measurements and states (up to local unitaries), along with the lowest eigenvalue of the partially transposed measurements to check entanglement. Note that the matrices are expressed in the computational basis

    \begin{equation}
        \mathcal{M}^{\textrm{ent}}_0=\begin{bmatrix}
            0.1522 & -0.0408+i0.0205 & -0.0226+i0.0052 & 0.3550-i0.0189\\
            -0.0408-i0.0205 & 0.8507 & 0.3252-i0.1365 & 0.0228-i0.0050\\
            -0.0226-i0.0052 & 0.3252+i0.1365 & 0.1747 & -0.0138+i0.0249\\
            0.3550+i0.0189 & 0.0228+i0.0050 & -0.0138-i0.0249 & 0.8496
        \end{bmatrix}
    \end{equation}
    After partially transposing the matrix with respect to the second subsystem, the lowest eigenvalue we obtain is $-0.0021$, which implies that the measurement is entangled. The density matrices corresponding to the messages sent by the two senders $A^{(1)}$ and $A^{(2)}$ are 

    \begin{equation}
        \begin{aligned}
            \rho^{(1)}_1=\begin{bmatrix}
                1 & 0\\
                0 & 0
            \end{bmatrix}, && \rho^{(1)}_2=\begin{bmatrix}
                0.4783 & 0.4869-i0.1118\\
                0.4869+i0.1118 & 0.5217
            \end{bmatrix}, && \rho^{(1)}_3=\begin{bmatrix}
                0.4807 & 0.4870-i0.1116\\
                0.4870+i0.1116 & 0.5193
            \end{bmatrix}
        \end{aligned}
    \end{equation}
\begin{equation}
        \begin{aligned}
            \rho^{(2)}_1=\begin{bmatrix}
                1 & 0\\
                0 & 0
            \end{bmatrix}, && \rho^{(2)}_2=\begin{bmatrix}
                0.1265 & 0.3218+i0.0834\\
                0.3218-i0.0834 & 0.8735
            \end{bmatrix}, && \rho^{(2)}_3=\begin{bmatrix}
                0.5328 & -0.4916-i0.0854\\
                -0.4916+i0.0854 & 0.4672
            \end{bmatrix}
        \end{aligned}
    \end{equation}

    The nonentangled measurement is 

    \begin{equation}
        \mathcal{M}^{\textrm{non-ent}}_0=\begin{bmatrix}
            0.0836 & 0.2236+i0.0848 & -0.0119+i0.0522 &0.1244-i0.0327\\
            0.2236-i0.0848 & 0.9164 & -0.0979-i0.0835 & 0.0119-i0.0522\\
            -0.0119-i0.0522 & -0.0979+i0.0835 & 0.3331 & -0.0545-i0.4470\\
            0.1244+i0.0327 & 0.0119+i0.0522 & -0.0545+i0.4470 & 0.6669
            
        \end{bmatrix}.
    \end{equation}
    The lowest eigenvalue of the partial transposition of this measurement with respect to the second subsystem results in $0$, implying that the measurement is separable. The messages are
\begin{equation}
        \begin{aligned}
            \rho^{(1)}_1=\begin{bmatrix}
                1 & 0\\
                0 & 0
            \end{bmatrix}, && \rho^{(1)}_2=\begin{bmatrix}
                0.0405 & -0.0439+i0.1921\\
                -0.0439-i0.1921 & 0.9595
            \end{bmatrix}, && \rho^{(1)}_3=\begin{bmatrix}
                0.0405 & -0.0439+i0.1921\\
                -0.0439-i0.1921 & 0.9595
            \end{bmatrix}
        \end{aligned},
    \end{equation}
    
    \begin{equation}
        \begin{aligned}
            \rho^{(2)}_1=\begin{bmatrix}
                1 & 0\\
                0 & 0
            \end{bmatrix}, && \rho^{(2)}_2=\begin{bmatrix}
                0.3441 & -0.0667-i0.4704\\
                -0.0667+i0.4704 & 0.6559
            \end{bmatrix}, && \rho^{(2)}_3=\begin{bmatrix}
                0.3080 & 0.2140+i0.4091\\
                0.2140-i0.4091 & 0.6920
            \end{bmatrix}
        \end{aligned}.
    \end{equation}

\blk
\section{Sum-Of-Squares decomposition of $\mathcal{W}_s$ operators}\label{sos decomposition}

\begin{lemma} \label{lem1}
Let the maximum eigenvalue of an operator $\mathcal{O}$ be denoted by $\lVert \mathcal{O} \rVert$. Then, for the operators $\mathcal{W}_s$ as defined in \eqref{w_i_def}, for the case where $\mathcal{A}_{x_j}^{(j)}=\psi_{0|x_j}^{(j)}-\psi_{1|x_j}^{(j)}$, with $\psi_{a_j|x_j}^{(j)}=\ket{\psi_{a_j|x_j}^{(j)}}\!\bra{\psi_{a_j|x_j}^{(j)}}$, the following inequality holds for all $s$
    \begin{equation}
        \lVert \mathcal{W}_s \rVert \leq 2\sqrt{2}(n-1).
    \end{equation}
\end{lemma}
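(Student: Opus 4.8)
The plan is to prove the bound by realizing the shifted operator $\beta_Q\mathbbm{1}-\mathcal{W}_s$, with $\beta_Q=2\sqrt{2}(n-1)$, as a manifestly positive semi-definite operator through the sum-of-squares decomposition already recorded in \eqref{shiifted bell operator}--\eqref{SOS 1}. The first thing I would isolate is the \emph{only} property of the operators $\mathcal{A}^{(j)}_{x_j}$ that the argument needs: since $\mathcal{A}^{(j)}_{x_j}=\psi^{(j)}_{0|x_j}-\psi^{(j)}_{1|x_j}$ is a difference of two rank-one projectors, it is Hermitian and traceless with $\lVert\mathcal{A}^{(j)}_{x_j}\rVert=\sqrt{1-|\langle\psi^{(j)}_{0|x_j}|\psi^{(j)}_{1|x_j}\rangle|^2}\le 1$, so that $(\mathcal{A}^{(j)}_{x_j})^2\le\mathbbm{1}$ for all $j,x_j$. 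This inequality is the substitute for the involution identity $(\mathcal{A}^{(j)}_{x_j})^2=\mathbbm{1}$ that would hold only for orthogonal messages, and it is what lets the bound survive when the encoded states are not orthogonal.

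Next I would check that $\sum_k\mathcal{O}_{sk}^\dagger\mathcal{O}_{sk}=\beta_Q\mathbbm{1}-\mathcal{W}_s$ is an \emph{algebraic identity}, valid for arbitrary $\mathcal{A}^{(j)}_{x_j}$. Writing $\mathcal{W}_s=\sqrt{2}\big[(n-1)\mathcal{P}_{1,s}+\sum_{j=2}^{n}\mathcal{P}_{j,s}\big]$ with $\mathcal{P}_{1,s},\mathcal{P}_{j,s}$ as in \eqref{SOS 1}, a direct expansion of the three terms in \eqref{sum square decomp} shows that the constant pieces sum to $\beta_Q\mathbbm{1}$, the pieces linear in $\mathcal{P}_{1,s}$ and $\mathcal{P}_{j,s}$ assemble into $-\mathcal{W}_s$, and every quadratic piece $\mathcal{P}_{1,s}^2,\mathcal{P}_{j,s}^2$ cancels between $\mathcal{O}_{s1}^\dagger\mathcal{O}_{s1}+\mathcal{O}_{s2}^\dagger\mathcal{O}_{s2}$ and $\mathcal{O}_{s3}^\dagger\mathcal{O}_{s3}$. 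Since $\mathcal{O}_{s1}^\dagger\mathcal{O}_{s1}$ and $\mathcal{O}_{s2}^\dagger\mathcal{O}_{s2}$ are positive multiples of squares of Hermitian operators, they are automatically positive semi-definite, so the entire content of the lemma is concentrated in the positivity of the remaining term $\mathcal{O}_{s3}^\dagger\mathcal{O}_{s3}$.

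The main obstacle is thus the operator inequality $(n-1)\mathcal{P}_{1,s}^2+\sum_{j=2}^{n}\mathcal{P}_{j,s}^2\le 2(n-1)\mathbbm{1}$ that makes $\mathcal{O}_{s3}^\dagger\mathcal{O}_{s3}\ge 0$, and I would establish it by tensor-factor monotonicity rather than by any explicit diagonalization. Setting $\mathcal{X}=\tfrac{1}{\sqrt{2}}(\mathcal{A}^{(1)}_0+\mathcal{A}^{(1)}_1)$ and $\mathcal{Z}=\tfrac{1}{\sqrt{2}}(\mathcal{A}^{(1)}_0-\mathcal{A}^{(1)}_1)$, each $\mathcal{P}_{1,s}^2$ and $\mathcal{P}_{j,s}^2$ factorizes as a positive operator on the first qubit tensored with factors $(\mathcal{A}^{(j)}_{\cdot})^2\le\mathbbm{1}$ on the remaining sites; replacing those factors by $\mathbbm{1}$ can only raise the operator, giving $\mathcal{P}_{1,s}^2\le\mathcal{X}^2\otimes\mathbbm{1}$ and $\mathcal{P}_{j,s}^2\le\mathcal{Z}^2\otimes\mathbbm{1}$. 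Summing reduces the claim to the single-qubit statement $\mathcal{X}^2+\mathcal{Z}^2=(\mathcal{A}^{(1)}_0)^2+(\mathcal{A}^{(1)}_1)^2\le 2\mathbbm{1}$, which is immediate from $(\mathcal{A}^{(1)}_{x_1})^2\le\mathbbm{1}$. Hence $\mathcal{O}_{s3}^\dagger\mathcal{O}_{s3}\ge 0$, so $\beta_Q\mathbbm{1}-\mathcal{W}_s\ge 0$ and $\lVert\mathcal{W}_s\rVert\le 2\sqrt{2}(n-1)$. I would finally note, for use in Theorem \ref{th1}, that the monotonicity steps are equalities exactly when every $(\mathcal{A}^{(j)}_{x_j})^2=\mathbbm{1}$, i.e.\ when all messages are orthogonal, which is what pins down saturation of the bound.
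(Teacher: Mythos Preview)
Your proposal is correct and follows essentially the same route as the paper: verify the algebraic SOS identity \eqref{shiifted bell operator}--\eqref{SOS 1}, observe that $(\mathcal{A}^{(j)}_{x_j})^2\le\mathbbm{1}$ since the eigenvalues of a difference of rank-one projectors are $\pm\sqrt{1-|\langle\psi^{(j)}_{0|x_j}|\psi^{(j)}_{1|x_j}\rangle|^2}$, and then use tensor monotonicity to bound $(n-1)\mathcal{P}_{1,s}^2+\sum_{j=2}^n\mathcal{P}_{j,s}^2\le(n-1)(\mathcal{X}^2+\mathcal{Z}^2)\otimes\mathbbm{1}\le 2(n-1)\mathbbm{1}$, which is exactly the paper's reduction (with $\mathcal{X}^2+\mathcal{Z}^2=(\mathcal{A}^{(1)}_0)^2+(\mathcal{A}^{(1)}_1)^2$). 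Your closing remark on saturation also matches what the paper needs for Theorem~\ref{th1}.
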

\begin{proof}
\vspace{-0.2 cm}
Let us consider the  operators $\{\mathcal{P}_{j,s}\}_{j=1}^n$, defined as
    \begin{equation}\label{operators}
    \mathcal{P}_{1,s}=(-1)^{s_1}\frac{1}{\sqrt{2}}\left(\mathcal{A}_0^{(1)}+\mathcal{A}_1^{(1)}\right)\otimes \bigotimes_{j=2}^n\mathcal{A}_0^{(j)},
        \mathcal{P}_{j,s}=(-1)^{s_j}\frac{1}{\sqrt{2}}\left(\mathcal{A}_0^{(1)}-\mathcal{A}_1^{(1)}\right)\otimes\mathcal{A}_1^{(j)} \text{  $\forall$ $j\in \{2,\cdots,n\}$},
    \end{equation}
  
which can be used to write a sum-of-squares (SOS) decomposition, $\beta_Q\mathbbm{1}-\mathcal{W}_s$, as defined in \eqref{shiifted bell operator}. In the following paragraphs, we show that $\beta_Q =2\sqrt{2}(n-1)$.
\vspace{-0.01 cm}
First, observe that
\begin{equation}\label{eqb3}
        \frac{n-1}{\sqrt{2}}\left(\mathbbm{1}-\mathcal{P}_{1,s}\right)^2+ \frac{1}{\sqrt{2}}\sum_{j=2}^n\left(\mathbbm{1}-\mathcal{P}_{j,s}\right)^2=\sqrt{2}(n-1)\mathbbm{1}-\mathcal{W}_s+\frac{n-1}{\sqrt{2}}\mathcal{P}_{1,s}^2+\frac{1}{\sqrt{2}}\sum_{j=2}^n\mathcal{P}_{j,s}^2.
\end{equation}

Rearranging the above equation makes it clear that $\sqrt{2}(n-1)\mathbbm{1}-\mathcal{W}_s$ cannot be written as an SOS because all the terms are not semidefinite positive. Thus, $\beta_Q\neq\sqrt{2}(n-1)$. However, by adding another $\sqrt{2}(n-1)\mathbbm{1}$ to the Eq. \eqref{eqb3}, we can write
    \begin{equation}\label{SOS}
        2\sqrt{2}\left(n-1\right)\mathbbm{1}-\mathcal{W}_s=\frac{n-1}{\sqrt{2}}\left(\mathbbm{1}-\mathcal{P}_{1,s}\right)^2+\frac{1}{\sqrt{2}}\sum_{j=2}^n\left(\mathbbm{1}-\mathcal{P}_{j,s}\right)^2+\sqrt{2}(n-1)\Bigg(\mathbbm{1}-\frac{1}{2(n-1)}\Bigg((n-1)\mathcal{P}_{1,s}^2+\sum_{j=2}^n\mathcal{P}_{j,s}^2\Bigg)\Bigg)
    \end{equation}

We now show that $\lVert(n-1)\mathcal{P}_{1,s}^2+\sum_{j=2}^n\mathcal{P}_{j,s}^2 \rVert \leq 2(n-1)$, which would imply that \eqref{SOS} is a valid SOS decomposition of the shifted $\mathcal{W}_i$ operator and consequently, $\beta_Q=2\sqrt{2}(n-1)$.
Note that the eigenvalues of $ \mathcal{A}_{x_j}^{(j)}=\pm\sqrt{1-|\la\psi_{0|x_j}^{(j)}|\psi_{1|x_j}^{(j)}\ra|^2}$, for all $j\in\{1,\cdots,n\}$ and $x_j\in\{0,1\}$ which implies that $-\mathbbm{1}\leq \mathcal{A}_{x_j}^{(j)} \leq \mathbbm{1}$ and consequently, $\left( \mathcal{A}_{x_j}^{(j)}\right)^2\leq \mathbbm{1}.$.
 With help of the operators $\mathcal{P}_1$ and $\mathcal{P}_j$ as defined in \eqref{operators} we can write the expression $ (n-1)\mathcal{P}_{1,s}^2+\sum_{j=2}^n\mathcal{P}_{j,s}^2=$ as

\begin{equation}
        (n-1)\mathcal{P}_{1,s}^2+\sum_{j=2}^n\mathcal{P}_{j,s}^2=\\
            \frac{n-1}{2}\left(\mathcal{A}_0^{(1)}+\mathcal{A}_1^{(1)}\right)^2\otimes \bigotimes_{j=2}^n\left(\mathcal{A}_0^{(j)}\right)^2+\\
            \frac{1}{2}\sum_{j=2}^n\left(\mathcal{A}_0^{(1)}-\mathcal{A}_1^{(1)}\right)^2\otimes\left(\mathcal{A}_1^{(j)}\right)^2,
\end{equation}
which, \re using the fact that $\left( \mathcal{A}_{x_j}^{(j)}\right)^2\leq \mathbbm{1}$, \blk  can be shown to be bounded as
    
\begin{equation}\label{equation A7}
     (n-1)\mathcal{P}_{1,s}^2+\sum_{j=2}^n\mathcal{P}_{j,s}^2 \leq \\\left(\left(\mathcal{A}_0^{(1)}+\mathcal{A}_1^{(1)}\right)^2+\left(\mathcal{A}_0^{(1)}-\mathcal{A}_1^{(1)}\right)^2\right)\otimes\mathbbm{1}_{2^{n-1}}.
\end{equation}
\re Upon expanding the right hand side of \eqref{equation A7} and canceling out the cross terms $\mathcal{A}_0^{(1)}\mathcal{A}_1^{(1)}$ and $\mathcal{A}_1^{(1)}\mathcal{A}_0^{(1)}$,Eq. \eqref{equation A7} reduces to
\begin{equation}
    (n-1)\mathcal{P}_{1,s}^2+\sum_{j=2}^n\mathcal{P}_{j,s}^2 \leq 
    ({n-1})\left(\left(\mathcal{A}_0^{(1)}\right)^2+\left(\mathcal{A}_1^{(1)}\right)^2\right)\otimes\mathbbm{1}_{2^{n-1}}.
\end{equation}\blk
\re Again using $\left( \mathcal{A}_{x_j}^{(j)}\right)^2\leq \mathbbm{1}$, \blk, it follows that $\lVert(n-1)\mathcal{P}_{1,s}^2+\sum_{j=2}^n\mathcal{P}_{j,s}^2 \rVert \leq 2(n-1)$, which implies that $2\sqrt{2}(n-1)\mathbbm{1}-\mathcal{W}_s\geq0$, or in other words, the maximum eigenvalue of $\mathcal{W}_s$ is upper-bounded by $2\sqrt{2}(n-1)$.
\end{proof}
However, the qubit states sent by the senders need not be pure. We show in the next lemma that even when one takes into consideration mixed states, the maximum eigenvalue of $\mathcal{W}_s$ cannot exceed $2\sqrt{2}(n-1)$. 
\begin{lemma} \label{lemm2}
    The maximum eigenvalue of the operators $\mathcal{W}_s$ as defined in \eqref{w_i_def} are upper-bounded by $2\sqrt{2}(n-1)$, even when $\mathcal{A}_{x_j}^{(j)}=\rho_{0|x_j}^{(j)}-\rho_{1|x_j}^{(j)}$, where $\rho_{a_j|x_j}^{(j)}$ are qubit mixed states.
\end{lemma}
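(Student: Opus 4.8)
The plan is to notice that the sum-of-squares argument of Lemma~\ref{lem1} nowhere uses purity of the communicated states; the only property of $\rho^{(j)}_{a_j|x_j}$ that it actually invokes is the operator bound $\left(\mathcal{A}^{(j)}_{x_j}\right)^2 \leq \mathbbm{1}$. So I would reduce the whole statement to re-establishing this single bound for a difference of mixed qubit states, after which the decomposition \eqref{SOS} can be quoted verbatim.

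First I would show that $-\mathbbm{1}\leq \mathcal{A}^{(j)}_{x_j}\leq \mathbbm{1}$ for arbitrary density operators. The quickest route uses the elementary inequalities $\rho^{(j)}_{0|x_j}-\rho^{(j)}_{1|x_j}\leq \rho^{(j)}_{0|x_j}\leq \mathbbm{1}$ (since $\rho^{(j)}_{1|x_j}\geq 0$ and every density operator is bounded above by the identity) together with the symmetric statement $\rho^{(j)}_{0|x_j}-\rho^{(j)}_{1|x_j}\geq -\rho^{(j)}_{1|x_j}\geq -\mathbbm{1}$. Equivalently, in the Bloch form $\rho^{(j)}_{a_j|x_j}=(\mathbbm{1}+\vec{m}^{(j)}_{a_j|x_j}\cdot\vec{\sigma})/2$ one has $\mathcal{A}^{(j)}_{x_j}=\tfrac{1}{2}(\vec{m}^{(j)}_{0|x_j}-\vec{m}^{(j)}_{1|x_j})\cdot\vec{\sigma}$, whose eigenvalues $\pm\tfrac{1}{2}|\vec{m}^{(j)}_{0|x_j}-\vec{m}^{(j)}_{1|x_j}|$ have magnitude at most $1$ because each Bloch vector has length at most one. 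Either way $\left(\mathcal{A}^{(j)}_{x_j}\right)^2\leq \mathbbm{1}$, which is exactly the hypothesis that produced the bound $\lVert (n-1)\mathcal{P}_{1,s}^2+\sum_{j=2}^n\mathcal{P}_{j,s}^2\rVert\leq 2(n-1)$ in Lemma~\ref{lem1}.

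With that bound in place I would simply re-run \eqref{SOS}: the operators $\mathcal{P}_{1,s},\mathcal{P}_{j,s}$ of \eqref{operators} remain Hermitian, the first two terms on its right-hand side are positive semidefinite as genuine squares of Hermitian operators, and the third term is positive semidefinite precisely because $(n-1)\mathcal{P}_{1,s}^2+\sum_{j=2}^n\mathcal{P}_{j,s}^2\leq 2(n-1)\mathbbm{1}$. Hence $2\sqrt{2}(n-1)\mathbbm{1}-\mathcal{W}_s\geq 0$ and $\lVert \mathcal{W}_s\rVert\leq 2\sqrt{2}(n-1)$, as desired.

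I do not expect a genuine obstacle here: the real content is recognizing that purity was inessential, the one point deserving care being the intuition that mixing a state cannot enlarge $\lVert\mathcal{A}^{(j)}_{x_j}\rVert$. As a sanity check, or even as an alternative proof, I would note that $\mathcal{W}_s$ is affine in each individual state $\rho^{(j)}_{a_j|x_j}$ (each appears linearly and in a single tensor slot) while the largest-eigenvalue functional is convex; consequently $\{\rho^{(j)}_{a_j|x_j}\}\mapsto\lVert\mathcal{W}_s\rVert$ is convex in each state separately, and a function convex in each block attains its maximum over a product of convex compact sets at a tuple of extreme points, i.e.\ at pure states. This immediately bounds the mixed-state value by the pure-state value $2\sqrt{2}(n-1)$ already supplied by Lemma~\ref{lem1}.
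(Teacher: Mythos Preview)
Your proposal is correct, and your primary route is genuinely different from the paper's. You observe that the SOS decomposition \eqref{SOS} in Lemma~\ref{lem1} only ever uses $(\mathcal{A}^{(j)}_{x_j})^2\leq\mathbbm{1}$, which you then verify directly for differences of arbitrary density operators; this is cleaner and shorter. The paper instead spectrally decomposes each mixed state, rewrites every $\mathcal{A}^{(j)}_{x_j}$ as a convex combination of pure-state differences, and thereby expresses $\mathcal{W}_s$ itself as a convex combination of operators $\tilde{\mathcal{W}}_s$ to each of which Lemma~\ref{lem1} applies---essentially your ``sanity check'' convexity argument carried out term by term. What the paper's longer route buys is an explicit handle on the equality case: saturation of the bound forces every pure-state summand to saturate, which in turn forces the messages to be pure and pairwise orthogonal. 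That consequence is used in the proof of Theorem~\ref{th1}, so while your argument is perfectly adequate for the lemma as stated, the paper's decomposition is doing a little extra work for later use.
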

\begin{proof}
    To begin with, let us fix $j$ for a particular sender and examine a specific operator  $\mathcal{A}_{0}^{(j)}$ which is equal to $\rho_{0|0}^{(j)}-\rho_{1|0}^{(j)}$. The mixed states can be spectrally decomposed such that, $\rho_{0|0}^{(j)}=p_1^{(j)}\psi_{0|0}^{(j)}+(1-p_1^{(j)}) \overline{\psi}_{0|0}^{(j)}$ and $\rho_{1|0}^{(j)}=p_2^{(j)}\psi_{1|0}^{(j)}+(1-p_2^{(j)}) \overline{\psi}_{1|0}^{(j)}$. Observe that the operator $\mathcal{A}_{0}^{(j)}$ can be written as
    \begin{equation} \label{A_convex_mix}
\mathcal{A}_{0}^{(j)}=p_1^{(j)}p_2^{(j)}\left(\psi_{0|0}^{(j)}-\psi_{1|0}^{(j)}\right)
    +p_1^{(j)}(1-p_2^{(j)})\left(\psi_{0|0}^{(j)}-\overline{\psi}_{1|0}^{(j)}\right)
    +(1-p_1^{(j)})p_2^{(j)}\left(\overline{\psi}_{0|0}^{(j)}-\psi_{1|0}^{(j)}\right)
    +(1-p_1^{(j)})(1-p_2^{(j)})\left(\overline{\psi}_{0|0}^{(j)}-\overline{\psi}_{1|0}^{(j)}\right).
    \end{equation}
    Careful probing shows that the sum of the coefficients in this equation adds up to one. Also, observe that each term such as $\psi_{0|0}^{(j)}-\psi_{1|0}^{(j)}$ or $\psi_{0|0}^{(j)}-\overline{\psi}_{1|0}^{(j)}$ is an equivalent of $A_0^{(j)}$ where instead of being the difference of mixed states, they are the difference of pure states. In other words, $\mathcal{A}_{0}^{(j)}$ can be written as a convex mixture of $\tilde{\mathcal{A}}_{0,l}^{(j)}$, that is, $\mathcal{A}_{0}^{(j)}=\sum_{l=1}^4\tilde{p}_l^{(j)}\tilde{\mathcal{A}}_{0,l}^{(j)}$ where $\tilde{p}_1^{(j)}=p_1^{(j)}p_2^{(j)}$ and $\tilde{\mathcal{A}}_{0,1}^{(j)}=\psi_{0|0}^{(j)}-\psi_{1|0}^{(j)}$ and similarly $\tilde{p}_{l}^{(j)}$ and $\tilde{\mathcal{A}}_{0,l}^{(j)}$ for $l=2,3,4$ are defined according to the remaining terms in \eqref{A_convex_mix}. It is easy to see that one can follow a similar method and write $\mathcal{A}_1^{(j)}=\sum_{l=1}^4\tilde{q}_l^{(j)}\tilde{\mathcal{A}}_{1,l}^{(j)}$. Plugging these in \eqref{w_i_def} we see that
    \begin{equation} \label{w_i_convex1}
    \begin{split}
    \mathcal{W}_s=
            &(n-1)(-1)^{s_1}\left(\sum_{l_1=1}^4\tilde{p}_{l_1}^{(1)}\tilde{\mathcal{A}}_{0,l_1}^{(1)}+\sum_{m_1=1}^4\tilde{q}_{m_1}^{(1)}\tilde{\mathcal{A}}_{1,m_1}^{(1)}\right) \bigotimes_{j=2}^n\sum_{l_j=1}^4\tilde{p}_{l_j}^{(j)}\tilde{\mathcal{A}}_{0,l_j}^{(j)}\\
            &+\sum_{j=2}^n(-1)^{s_j}\left(\sum_{l_1=1}^4\tilde{p}_{l_1}^{(1)}\tilde{\mathcal{A}}_{0,l_1}^{(1)}-\sum_{m_1=1}^4\tilde{q}_{m_1}^{(1)}\tilde{\mathcal{A}}_{1,m_1}^{(1)}\right)\otimes\sum_{m_j=1}^4\tilde{q}_{m_j}^{(j)}\tilde{\mathcal{A}}_{1,m_j}^{(j)}.
            \end{split}
         \end{equation}
    Note that no term in \eqref{w_i_convex1} contains the product of every possible convex coefficient. However, we safely insert the missing sums of coefficients since they add up to one, and pull out the products of all the coefficients \vspace{-0.5 cm}such that we end up with
    \begin{equation} \label{w_i_convex2}
        \mathcal{W}_s=\sum_{l_1,\cdots,l_n;m_1,\cdots,m_n=1}^4\prod_{j,k=1}^n\left(\tilde{p}_{l_j}^{(j)}\tilde{q}_{m_k}^{(k)}\right)\Tilde{\mathcal{W}}_s^{l_1,\cdots,l_n;m_1,\cdots,m_n},
    \end{equation}
    where $\Tilde{\mathcal{W}}_s^{l_1,\cdots,l_n;m_1,\cdots,m_n}$ is defined in accordance with \eqref{w_i_def} using $\tilde{\mathcal{A}}_{0,l_j}^{(j)}$ and $\tilde{\mathcal{A}}_{1,m_k}^{(k)}$ for different index values of $l_j$ and $m_k$. From Lemma \ref{lem1} we know that $\lVert\Tilde{\mathcal{W}}_s^{l_1,\cdots,l_n;m_1,\cdots,m_n}\rVert\leq 2\sqrt{2}(n-1)$, and since \eqref{w_i_convex2} represents $\mathcal{W}_s$ as a convex mixture of $\Tilde{\mathcal{W}}_s^{l_1,\cdots,l_n;m_1,\cdots,m_n}$, it follows that $\lVert \mathcal{W}_s \rVert \leq 2\sqrt{2}(n-1)$.Now, $\lVert\Tilde{\mathcal{W}}_s^{l_1,\cdots,l_n;m_1,\cdots,m_n}\rVert=2\sqrt{2}(n-1)$ for all values of $\{l_1,\cdots,l_n;m_1,\cdots,m_n\}$ further implies that for all possible $l_j$ and $m_k$ $\lVert \tilde{\mathcal{A}}_{0,l_j}^{(j)}\rVert =1$ and $\lVert \tilde{\mathcal{A}}_{1,m_k}^{(k)}\rVert =1$. For example, let us consider $\tilde{\mathcal{A}}_{0,1}^{(k)}=\psi_{0|0}^{(k)}-\psi_{1|0}^{(k)}$. We see that $\lVert \tilde{\mathcal{A}}_{0,1}^{(k)} \rVert=1$ demands that $\la \psi_{0|0}^{(k)}|\psi_{1|0}^{(k)}\ra=0$, and similarly $\lVert \tilde{\mathcal{A}}_{0,l}^{(k)} \rVert=1$ for $l=2,3,4$ demands that $\la \psi_{0|0}^{(k)}|\overline{\psi}_{1|0}^{(k)}\ra=0$, $\la \overline{\psi}_{0|0}^{(k)}|\psi_{1|0}^{(k)}\ra=0$ and $\la \overline{\psi}_{0|0}^{(k)}|\overline{\psi}_{1|0}^{(k)}\ra=0$.  
\end{proof}

\section{Bloch vector antipodality in $n$-Sender, single-receiver configurations}\label{structure_of_metric}
\begin{lemma}
    For any given measurement $\{\mathcal{M}_s\}_s$, the success metric $\mathcal{S}$ as defined in \eqref{SGen} is maximized for a set of messages which are pure and antipodal, that is,
\begin{equation}
        \begin{split}    \rho_{a_j|x_j}^{(j)}=\ket{\psi_{a_j|x_j}^{(j)}}\!\bra{\psi_{a_j|x_j}^{(j)}}, \quad \textrm{and } \quad \bra{\psi_{0|x_j}^{(j)}}\psi_{1|x_j}^{(j)}\rangle=0, \quad \forall j\in\{1,2,\cdots,n\} \textrm{ and } x_j\in\{0,1\}.
        \end{split}
\end{equation}
\end{lemma}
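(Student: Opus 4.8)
The plan is to reduce the optimization over messages to an optimization over the single-qubit difference operators $\mathcal{A}_{x_j}^{(j)}=\rho_{0|x_j}^{(j)}-\rho_{1|x_j}^{(j)}$ and then to exploit that the success metric is \emph{linear} in each of these operators, mirroring the two-sender reformulation \eqref{decompo}--\eqref{ortho_cond}. First I would rewrite $\sum_s W_s=\sum_s\tr(\mathcal{M}_s\mathcal{W}_s)$ by interchanging the outcome sum with the operator expression \eqref{w_i_def}: introducing the Hermitian operators $f_j(\mathcal{M})=\sum_s(-1)^{s_j}\mathcal{M}_s$, the metric collects into one term built from $f_1(\mathcal{M})$ and $(\mathcal{A}_0^{(1)}+\mathcal{A}_1^{(1)})\otimes\bigotimes_{j=2}^n\mathcal{A}_0^{(j)}$, plus $n-1$ terms built from $f_j(\mathcal{M})$ and $(\mathcal{A}_0^{(1)}-\mathcal{A}_1^{(1)})\otimes\mathcal{A}_1^{(j)}$, with identities on the remaining sites. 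The central observation is that every $\mathcal{A}_{x_j}^{(j)}$ enters this expression to at most the first power, so the metric (a fixed positive multiple of $\sum_s W_s$) is a multilinear function of the collection $\{\mathcal{A}_{x_j}^{(j)}\}_{j,x_j}$.

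Next I would identify the feasible domain of each $\mathcal{A}_{x_j}^{(j)}$. Since $\rho_{0|x_j}^{(j)}$ and $\rho_{1|x_j}^{(j)}$ are arbitrary independent qubit density operators, $\mathcal{A}_{x_j}^{(j)}$ ranges exactly over the compact convex set $\mathcal{D}=\{\,\vec a\cdot\vec\sigma : |\vec a|\le 1\,\}$ of traceless Hermitian qubit operators of norm at most one. The extreme points of $\mathcal{D}$ are precisely the operators with $|\vec a|=1$, i.e. $\mathcal{A}=\ket{\psi}\!\bra{\psi}-\ket{\psi^\perp}\!\bra{\psi^\perp}$; for such an extreme point the decomposition $\mathcal{A}=\rho_0-\rho_1$ is \emph{unique} and forces $\rho_0=\ket{\psi}\!\bra{\psi}$, $\rho_1=\ket{\psi^\perp}\!\bra{\psi^\perp}$, which are exactly the pure, antipodal messages claimed. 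Equivalently, fixing all operators but one, the residual dependence is $\tr(\mathcal{A}_{x_j}^{(j)}G)$ for a single-qubit Hermitian $G$ obtained by partially tracing the other factors against the appropriate $f_j(\mathcal{M})$, and this is maximized by taking $\rho_{0|x_j}^{(j)},\rho_{1|x_j}^{(j)}$ to be the top and bottom eigenstates of $G$, which are orthogonal.

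I would then invoke the standard principle that a multilinear function on a product of compact convex sets attains its maximum at a tuple of extreme points: starting from any global maximizer, I optimize one coordinate $\mathcal{A}_{x_j}^{(j)}$ at a time over $\mathcal{D}$, replacing it by an extreme point without decreasing the value, until every coordinate is extremal. This yields a maximizer in which every $\mathcal{A}_{x_j}^{(j)}$ is pure-antipodal, establishing the lemma. The bookkeeping for sender $1$ needs slight care, since $\mathcal{A}_0^{(1)}$ and $\mathcal{A}_1^{(1)}$ each appear in both families of terms; collecting them shows that the dependence on $\mathcal{A}_0^{(1)}$ (resp. $\mathcal{A}_1^{(1)}$) is still linear, of the form $\tr[\mathcal{A}_0^{(1)}(P+Q)]$ (resp. $\tr[\mathcal{A}_1^{(1)}(P-Q)]$) with $P,Q$ independent of sender $1$, so the same extreme-point conclusion applies.

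The main obstacle is the careful tensor-product bookkeeping rather than any deep inequality: one must verify that each $\mathcal{A}_{x_j}^{(j)}$ genuinely enters only linearly (the first term is multilinear in the distinct $\mathcal{A}_0^{(j)}$, and sender $1$ contributes additively across both families), and that partially tracing the remaining factors against $f_j(\mathcal{M})$ indeed produces a single-qubit Hermitian operator. Once this reduction is in place, the extremality characterization of $\mathcal{D}$ and the coordinate-wise maximization do the rest, with the degenerate case (effective $G\propto\mathbbm{1}$) posing no difficulty, since a pure-antipodal pair still attains the (then constant) maximum along that coordinate.
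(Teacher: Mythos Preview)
Your proposal is correct and shares the paper's starting point of collecting the outcome sum into the Hermitian operators $f_j(\mathcal{M})=\sum_s(-1)^{s_j}\mathcal{M}_s$. From there the paper expands each $\mathcal{A}_{x_j}^{(j)}$ back into the individual density matrices, groups terms so as to isolate a single difference $\rho_{0|x_k}^{(k)}-\rho_{1|x_k}^{(k)}$ on one tensor slot, and argues that each such paired term is maximized when the two states are pure and orthogonal. You instead keep the $\mathcal{A}_{x_j}^{(j)}$ themselves as the optimization variables, identify their feasible set as the unit ball of traceless Hermitian qubit operators, observe that the metric is linear in each of them, and invoke the extreme-point principle via coordinate-wise replacement. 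This buys you a cleaner global argument: the paper's term-by-term maximization is somewhat informal (the same density matrices recur across many terms, and the assertion that the optimal product states $\rho,\bar\rho$ are \emph{eigenstates} of $f_j(\mathcal{M})$ is not literally what is needed), whereas your multilinearity-plus-extreme-points device is standard, rigorous, and automatically handles the consistency across terms. The uniqueness of the decomposition $\mathcal{A}=\rho_0-\rho_1$ at the extreme points, which you note, is exactly what pins down purity and antipodality of the messages rather than merely of the differences.
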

\begin{proof}
Consider the success metric given by \eqref{SGen}. Each $W_s=\tr(\mathcal{W}_s\mathcal{M}_s)$ where $\mathcal{W}_s$ is defined in terms of $\mathcal{A}_{x_k}^{(k)}=\rho_{0|x_k}^{(k)}-\rho_{1|x_k}^{(k)}$ according to \eqref{w_i_def}. For any sender $A^{(k)}$, let $\mathcal{A}_{0}^{(k)}=\sum_{a_k=0,1}(-1)^{a_k}\rho_{a_k|0}^{(k)}$ and $\mathcal{A}_{1}^{(k)}=\sum_{b_k=0,1}(-1)^{b_k}\rho_{b _k|1}^{(k)}$. Expressed in this fashion, \eqref{w_i_def} can be rewritten as

\begin{equation}
\begin{split}
    \mathcal{W}_s=&(n-1)(-1)^{s_1}\left(\sum_{a_1=0,1}(-1)^{a_1}\rho_{a_1|0}^{(1)}+\sum_{b_1=0,1}(-1)^{b_1}\rho_{b_1|1}^{(1)}\right)\otimes\bigotimes_{j=2}^n\left(\sum_{a_j=0,1}(-1)^{a_j}\rho_{a_j|0}^{(j)}\right)+\\&\sum_{j=2}^n(-1)^{s_j}\left(\sum_{a_1=0,1}(-1)^{a_1}\rho_{a_1|0}^{(1)}-\sum_{b_1=0,1}(-1)^{b_1}\rho_{b_1|1}^{(1)}\right)\otimes\left(\sum_{b_j=0,1}(-1)^{b_j}\rho_{b_j|1}^{(j)}\right).
    \end{split}
\end{equation}
Substituting this expression of $\mathcal{W}_s$ in \eqref{SGen}, we obtain
\begin{equation}
    \begin{split}
        \mathcal{S}=\frac{1}{2^n(n-1)2\sqrt{2}}\tr\left((n-1)\mathcal{Q}_1f_1(\mathcal{M})+\sum_{j=2}^n\mathcal{Q}_jf_j(\mathcal{M})\right),
    \end{split}
\end{equation}
where $\{f_j(\mathcal{M})\}_{j=1}^n$ are defined as
\begin{equation}
    f_j(\mathcal{M})=\sum_s(-1)^{s_j}\mathcal{M}_s \textrm{ } \forall \textrm{ } j\in\{1,\cdots,n\}.
\end{equation}
The quantity $\mathcal{Q}_1$ is defined as
\begin{equation}
    \begin{split}
        &\mathcal{Q}_1=\mathcal{Q}_{11}+\mathcal{Q}_{12}\textrm{ } \textrm{ where,}\\
        & \mathcal{Q}_{11}=\sum_{a_1,a_2,\cdots,a_n=0,1}(-1)^{\bigoplus_{j=1}^n a_j}\bigotimes_{k=1}^n\rho_{a_k|0}^{(k)}\textrm{ } \textrm{ and,}\\
        &\mathcal{Q}_{12}=\sum_{b_1,a_2,\cdots,a_n=0,1}(-1)^{b_1 \oplus \bigoplus_{j=2}^n a_j}\rho_{b_1|1}^{(1)}\otimes\bigotimes_{k=2}^n\rho_{a_k|0}^{(k)},
    \end{split}
\end{equation}
while the rest of the $\{\mathcal{Q}_j\}_{j=2}^n$ are defined as
\vspace{-0.3 cm}
\begin{equation}
    \begin{split}
        &\mathcal{Q}_j=\mathcal{Q}_{j1}+\mathcal{Q}_{j2} \textrm{ } \textrm{ where,}\\
        &\mathcal{Q}_{j1}=\sum_{a_1,b_j=0,1}(-1)^{a_1 \oplus b_j}\rho_{a_1|0}^{(1)}\otimes\rho_{b_j|1}^{(j)}\textrm{ } \textrm{ and,}\\
        &\mathcal{Q}_{j2}=\sum_{b_1,b_j=0,1}(-1)^{b_1 \oplus b_j \oplus 1}\rho_{b_1|1}^{(1)}\otimes\rho_{b_j|1}^{(j)}.
    \end{split}
\end{equation}
Note that every $f_j(\mathcal{M})$ is a Hermitian operator since it is a linear combination of POVM elements and the structure of $\mathcal{Q}_j$ is such that one can write $\mathcal{S}$ as a sum of terms which are of the form $\tr(\rho f_j(\mathcal{M})-\overline{\rho}f_j(\mathcal{M}))$ where $\rho$ and $\overline{\rho}$ are density matrices. For example, consider the first set of summed terms in $\mathcal{Q}_1$, that is $\mathcal{Q}_{11}$. It is a sum of $2^n$ terms, and they can be rewritten as a sum of $2^{n-1}$ terms, where each such term is a difference of two density matrices $\rho$ and $\overline{\rho}$. In each such term, if $\rho$ has the form,
\begin{equation}
    \rho=\rho_{a_1|0}^{(1)}\otimes\rho_{a_2|0}^{(2)}\otimes\cdots\otimes\rho_{0|0}^{(k)}\otimes\cdots\otimes\rho_{a_n|0}^{(n)},
\end{equation}
then $\overline{\rho}$ has the form
\begin{equation}
    \overline{\rho}=\rho_{a_1|0}^{(1)}\otimes\rho_{a_2|0}^{(2)}\otimes\cdots\otimes\rho_{1|0}^{(k)}\otimes\cdots\otimes\rho_{a_n|0}^{(n)},
\end{equation}
such that when traced out after a matrix multiplication with $f_1(\mathcal{M})$, the term $\tr(\rho f_1(\mathcal{M})-\overline{\rho}f_1(\mathcal{M}))$ reads

\begin{equation}
\tr(\rho f_1(\mathcal{M})-\overline{\rho}f_1(\mathcal{M}))=\tr\Big[\Big(\rho_{a_1|0}^{(1)}\otimes\cdots\otimes\rho_{a_{k-1}|0}^{(k-1)}\otimes(\rho_{0|0}^{(k)}-\rho_{1|0}^{(k)})\otimes\cdots\otimes\rho_{a_n|0}^{(n)}\Big)f_1(\mathcal{M})\Big].
\end{equation}

The above term attains its maximum value when $\rho$ and $\overline{\rho}$ are rank-1 projectors and eigenstates of $f_1(\mathcal{M})$ such that $\rho$ corresponds to the maximum eigenvalue of $f_1(\mathcal{M})$ and $\overline{\rho}$ to its minimum eigenvalue. This implies that $\rho$ and $\overline{\rho}$ must be orthogonal, which in turn implies that $\rho_{0|0}^{(k)}$ and $\rho_{1|0}^{(k)}$ must be orthogonal. A similar conclusion can be arrived at by considering the terms in $\mathcal{Q}_{12}$ and the terms in $\mathcal{Q}_j$ for $j \in \{2,3,\cdots,n\}$, such that it turns out that to attain the maximum possible value of $\mathcal{S}$, each sender $A^{(j)}$ must send messages which are pure and for a given $x_j\in \{0,1\}$ the messages $\rho_{0|x_j}^{(j)}=\ket{\psi_{0|x_j}^{(j)}}\!\bra{\psi_{0|x_j}^{(j)}}$ and $\rho_{1|x_j}^{(j)}=\ket{\psi_{1|x_j}^{(j)}}\!\bra{\psi_{1|x_j}^{(j)}}$ must be orthogonal, that is, $\bra{\psi_{1|x_j}^{(j)}}\psi_{0|x_j}^{(j)}\ra=0.$
\end{proof}
\end{document}